\newcommand{\nop}[1]{}
\newcommand{\tabincell}[2]{\begin{tabular}{@{}#1@{}}#2\end{tabular}}
\newtheorem{definition}{Definition}
\newtheorem{theorem}{Theorem}
\newtheorem{example}{Example}
\title{Query Workload-based RDF Graph Fragmentation and Allocation}
 \author{
{{Peng Peng${^1}$}, Lei Zou{${^{1,3}}\thanks{corresponding author: zoulei@pku.edu.cn}$}, Lei Chen{${^2}$}, Dongyan Zhao{${^{1,3}}$}}%
\vspace{1.6mm}\\
\fontsize{10}{10}\selectfont\itshape $~^{1}$Peking University,
China;\\ \fontsize{10}{10}\selectfont\itshape $~^{2}$ Hong Kong
University of Science and Technology, China; \\
\fontsize{10}{10}\selectfont\itshape $~^{3}$ Key Laboratory of Computational Linguistics (PKU), Ministry of Education, China \\
\fontsize{9}{9}\selectfont\ttfamily\upshape $\{$
pku09pp,zoulei,zhaodongyan$\}$@pku.edu.cn, leichen@cse.ust.hk
\\}
\begin{document}
\maketitle
\begin{abstract}
As the volume of the RDF data becomes increasingly large, it is essential for us to design a distributed database system to manage it. For distributed RDF data design, it is quite common to partition the RDF data into some parts, called \emph{fragments}, which are then distributed. Thus, the distribution design consists of two steps: fragmentation and allocation. In this paper, we propose a method to explore the intrinsic similarities among the structures of queries in a workload for fragmentation and allocation, which aims to reduce the communication cost during SPARQL query processing. Specifically, we mine and select some \emph{frequent access patterns} to reflect the characteristics of the workload. Based on the selected frequent access patterns, we propose two fragmentation strategies, vertical and horizontal fragmentation strategies, to divide RDF graphs while meeting different kinds of query processing objectives. Vertical fragmentation is for better throughput and horizontal fragmentation is for better performance. After fragmentation, we discuss how to allocate these fragments to various sites. Finally, we discuss how to process a query based on the results of fragmentation and allocation. Extensive experiments confirm the superior performance of our proposed solutions.
\end{abstract}


%
\section{Introduction}\label{sec:Introduction}
As a standard model for publishing and exchanging data on the Web, \emph{R}esource \emph{D}escription \emph{F}ramework (RDF) has been widely used in various applications to expose, share, and connect pieces of data on the Web. In RDF, data is represented as triples of the form $\langle$subject, property, object$\rangle$.  An RDF dataset can be naturally seen as a graph, where subjects and objects are vertices connected by named relationships (i.e., properties). SPARQL is a structured query language proposed by W3C to access RDF repository. As we know, answering a SPARQL query $Q$ is equivalent to finding subgraph matches of query graph $Q$ over an RDF graph $G$ \cite{DBLP:gStore}. Figures \ref{fig:exampleRDFGraph} and \ref{fig:ExampleSPARQL} show an RDF graph and a set of SPARQL query graphs used as the running example in this paper.

As RDF repositories increase in size, evaluating SPARQL queries is beyond the capacity of a single machine. For example, DBpedia, a project aiming to extract structured content from Wikipedia, consists of 2.46 billion RDF triples \cite{url:DBpedia}; according to the W3C, the numbers of triples in some commercial RDF datasets have been more than 1 trillion \cite{DBLP:Partout}. The large-scale of RDF data volume increases the demand of designing the high performance distributed RDF database system.

In distributed database design, the first issue is ``data fragmentation and allocation'' \cite{DBLP:distributedRDBMS}. We need to divide an RDF graph into several parts, called \emph{fragments}, and then distribute them among sites. One important issue during data fragmentation and allocation in a distributed system is how to reduce the communication cost between different fragments during distributed query evaluation (assuming different fragments are resident at different sites). To minimize the communication cost, many existing graph fragmentation strategies maximize the global goal (such as min-cut \cite{DBLP:metis}). However, evaluating a SPARQL query is a subgraph (homomorphism) match problem. The subgraph match computation often does not involve all vertices in graph $G$, and the communication cost of subgraph match computation depends on not only the RDF graph but also the query graph. In other words, subgraph match computation exhibits strong locality. There is no direct relation between minimizing the communication cost (in subgraph match computation) and maximizing the global goal. Hence, we propose a \emph{local pattern-based fragmentation} strategy in this paper, which can reduce the communication cost of subgraph match computation.

\nop{
There is no direct relation between minimizing the communication cost (in subgraph match computation) and maximizing the global goal (such as min-cut \cite{DBLP:metis}) in the first category approaches. Thus, the first category of graph fragmentation strategies are not suitable for SPARQL query problem. Instead, if a graph fragmentation can lead to few crossing matches, it can reduce the communication cost.
}

The intuition behind the local pattern-based fragmentation is as follows: if a query ``satisfies'' a local pattern and all its matches are in a single fragment, then the query can be evaluated on the single fragment and no communication cost is needed to answering the query. The key issue in local pattern-based fragmentation is how to define the ``local patterns''. Different from the existing methods, we consider the query workload-driven ``local pattern'' definition.

\subsection{Why Query Workload Matters ?}
The workload-driven distributed data fragmentation has been well studied in relational databases \cite{DBLP:distributedRDBMS}.  However, few RDF data fragmentation proposals consider the query workload except for \cite{DBLP:WARP,DBLP:Partout}. We will review these related papers in Section \ref{sec:related}. Here, we discuss why the query workloads is important for RDF data fragmentation.

We study one real SPAQRL query workload, the DBpedia query workload, which records 8,151,238 SPARQL queries issued in 14 days of 2012\footnote{http://aksw.org/Projects/DBPSB.html}. For this workload, if we set the minimum support threshold as 0.1$\%$ of the total number of queries, we mine 163 frequent subgraph patterns. The most surprising is that 97$\%$ query graphs are isomorphic to one of the 163 frequent subgraph patterns. Thus, if we use these frequent subgraph patterns as the basic fragmentation units, 97$\%$ SPARQL queries do not lead to communication cost, since their matches are resident at one fragment.

\subsection{Our Solution}
According to the above motivation, we propose a workload-driven data fragmentation for distributed RDF graph systems. Specifically, we
first mine frequent subgraph patterns, named \emph{frequent access patterns}, in the query workload. We treat these frequent access patterns as the \emph{implicit} schemas for the underlying RDF data. Then, we propose two fragmentation strategies based on these implicit schemas. We study the following technical issues in this paper.

\emph{\underline{Frequent Access Pattern Selection}.} Given a frequent access pattern, we build a fragment by collecting all its matches in the RDF graph. In this way, we can reduce the communication cost (i.e., improve query performance) if a SPARQL query satisfies the frequent access pattern. However, if we simply select all frequent access patterns as the implicit schemas, it may lead to expensive space cost due to the data replication, since different frequent access patterns may involve share the same edges. In other words, we have a tradeoff between performance gain and space cost during selecting frequent access patterns. We formalize the frequent access pattern selection problem (Section \ref{sec:PatternsSelection}) and prove that it is a NP-hard problem (Theorem \ref{theorem:Submodularity}). Thus, we propose a heuristic algorithm which can guarantee the data integrity and the approximation ratio (Theorem \ref{theorem:ApproximationRatio}). This algorithm also achieves the good performance (See experiments in Section \ref{sec:Experiment}).

\begin{figure}
\begin{center}
    \includegraphics[scale=0.35]{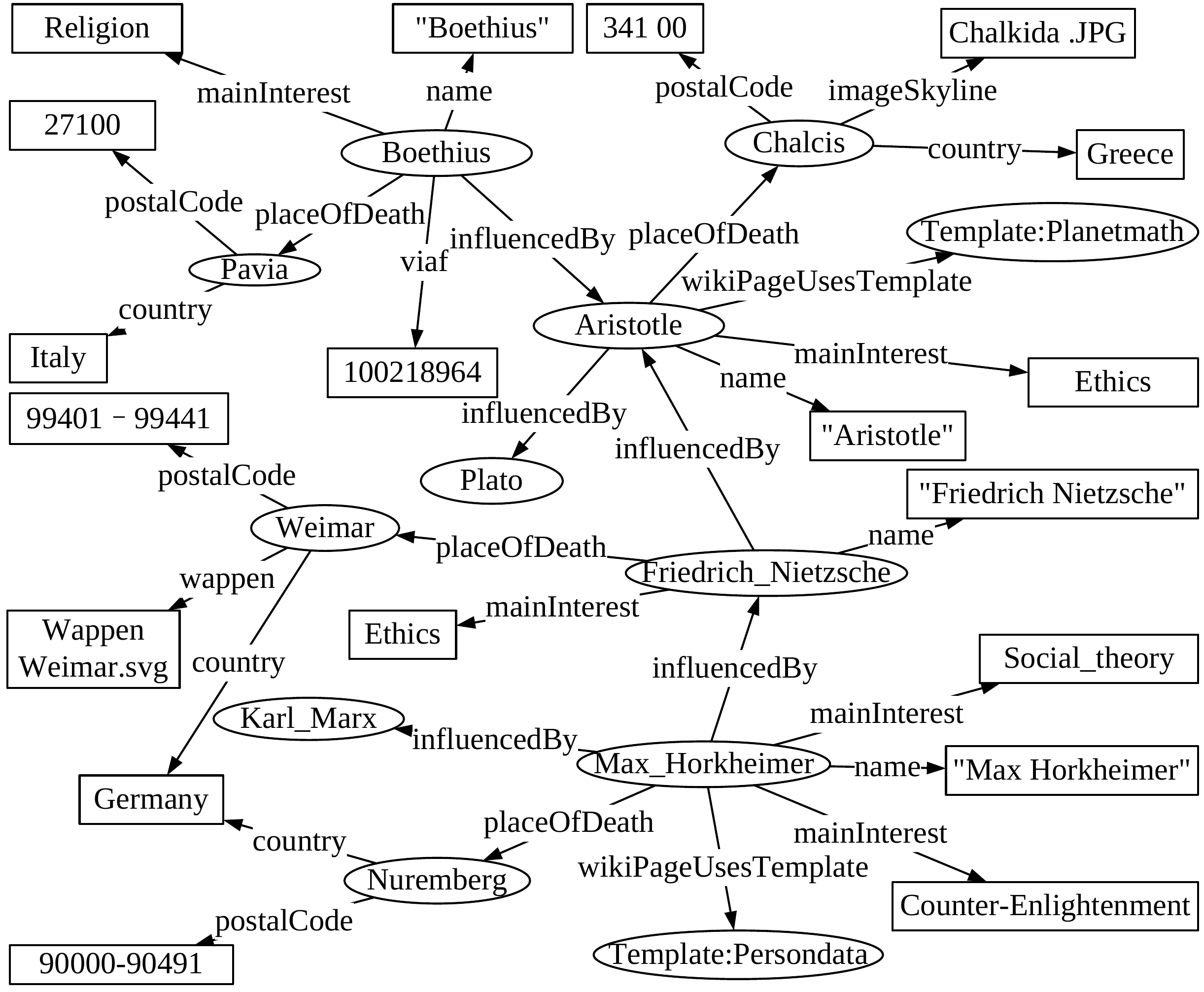}
    \vspace{-0.2in}
   \caption{Example RDF Graph}
   \label{fig:exampleRDFGraph}
\end{center}
\end{figure}

\begin{figure}
\begin{center}
    \includegraphics[scale=0.25]{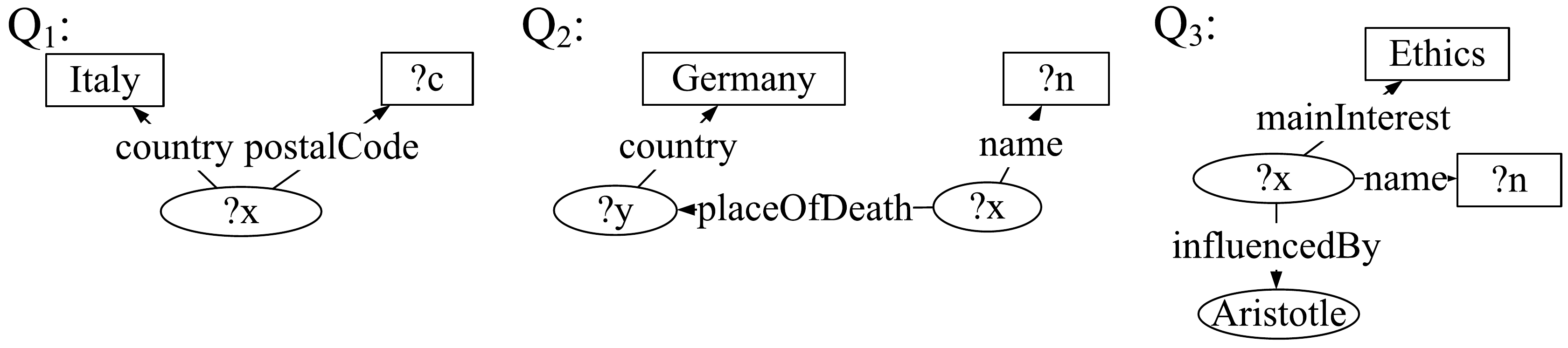}
    \vspace{-0.2in}
   \caption{Example SPARQL Query Graphs}
   \label{fig:ExampleSPARQL}
\end{center}
\end{figure}

\nop{
\emph{\underline{Vertical, Horizontal and Mixed Fragmentation}.} Based on the selected frequent access patterns (i.e., implicit schemas), we design three different fragmentation strategies, i.e, vertical, horizontal and mixed fragmentation. To perform the horizontal fragmentation over RDF graphs, we extend the concept of ``minterm predicate'' in distributed relational database systems to ``structural minterm predicate'' (see Section \ref{sec:HorizontalFragmentation}), which consider the structures of both RDF graphs and workloads. Different applications have different requirements, so we provide customizable options that can be used for different RDF graphs and SPARQL query workloads.
}

\emph{\underline{Vertical and Horizontal Fragmentation}.} Based on the selected frequent access patterns (i.e., implicit schemas), we design two fragmentation strategies, i.e, vertical and horizontal fragmentation. These two fragmentation strategies are adaptive to different query processing objectives. The objective of vertical fragmentation strategy is to improve the query throughout, and requires that all structures involved by one frequent access pattern should be placed to the same fragment. Instead, the horizontal fragmentation strategy distributes the structures involved by one frequent access pattern among different fragments to maximize the parallelism of query evaluation, namely, reducing the query response time for a single query. To perform the horizontal fragmentation over RDF graphs, we extend the concept of ``minterm predicate'' in \cite{DBLP:distributedRDBMS} to ``structural minterm predicate'' (see Section \ref{sec:HorizontalFragmentation}), which consider the structures of both RDF graphs and workloads. Different applications have different requirements, so we provide customizable options that can be used for different RDF graphs and SPARQL query workloads.

\emph{\underline{Query Decomposition}.} As we know, the query decomposition always depends on the fragmentation. In traditional vertical and horizontal fragmentation in RDBMS and XML, the query decomposition is unique, since there is no overlap between different fragments. As mentioned before, there are some data replications in our fragmentation strategies for RDF graphs. Thus, we may have multiple decomposition results for a query. A cost model driven selection is proposed in this paper.

The contributions of this paper can be summarized as follows:

\begin{itemize}
  \item We analyze the characteristics of the real SPARQL query workload and use the intrinsic similarities of queries in the workload to mine and select some frequent access patterns for distributed RDF data design. Although we prove that the problem of frequent access pattern selection is NP-hard, we propose a heuristic method to achieve the good performance.
  \item Based on the above scheme, we propose two fragmentation strategies, vertical and horizontal fragmentation, to divide the RDF graph into many fragments and a cost-aware allocation algorithm to distribute fragments among sites. The two fragmentation strategies provide customizable options that are adaptive to different applications.
  \item We propose a cost-aware query optimization method to decompose a SPARQL query and generate a distributed execution plan. With the decomposition results and execution plan, we can efficiently evaluate the SPARQL query.
  \item We do experiments over both real and synthetic RDF datasets and SPARQL query workloads to verify our methods.
\end{itemize}

\nop{
The rest of paper is organized as follows. First, we briefly introduce the background of this paper in Section \ref{sec:background}. Then, we give an overview of our methods in Section \ref{sec:Overview}. We formally define the \emph{frequent access patterns} and discuss how to select the optimal set of frequent access patterns in Section \ref{sec:FragmentPattern}. Then, we discuss how to use the frequent access patterns to divide the RDF graph in Section \ref{sec:Fragmentation} and how to allocate all fragments to different sites in Section \ref{sec:Allocation}. How to process queries are discussed in Section \ref{sec:QueryProcessing}. Experiments on real RDF datasets and SPARQL query workload are presented in Section \ref{sec:Experiment}. Finally, we discuss the related work in Section \ref{sec:related}, and conclude in Section \ref{sec:Conclusion}.
}

\section{Preliminaries}\label{sec:background}
In this section, we review the terminologies used in this paper and formally define the problem to be addressed.

\subsection{RDF and SPARQL}\label{sec:background-RDF-SPARQL}
RDF data can be represented as a graph according to the following definition.

\begin{definition}\label{def:graph} \textbf{(RDF Graph)}
An RDF graph is denoted as $G=\{V(G),$ $E(G),L \}$, where (1) $V(G)$ is a set of vertices that correspond to all subjects and objects in RDF data; (2) $E(G) \subseteq V(G) \times V(G)$ is a set of directed edges that correspond to all triples in RDF data; and (3) $L$ is a set of edge labels. For each edge $e \in E(G)$, its edge label is its corresponding property.
\end{definition}

Similarly, a SPARQL query can also be represented as a query graph $Q$. For simplicity, we ignore FILTER statements in SPARQL syntax in this paper.
\begin{definition}\label{def:query}\textbf{(SPARQL Query)}
A \emph{SPARQL query} is denoted as $Q=\{V(Q),$ $E(Q), L^{\prime}\}$, where (1) $V(Q) \subseteq V(G)\cup V_{Var}$ is a set of vertices, where $V(G)$ denotes vertices in RDF graph $G$ and $V_{Var}$ is a set of variables; (2) $E(Q) \subseteq V(Q) \times V(Q)$ is a set of edges in $Q$; and (3) $L^{\prime}$ is also a set of edge labels, and each edge $e$ in $E(Q)$ either has an edge label in $L$ (i.e., property) or the edge label is a variable.
\end{definition}

In this paper, we assume that $Q$ is a connected graph; otherwise, all connected components of Q are considered separately. Given a SPARQL query $Q$ over RDF graph $G$, a SPARQL match is a subgraph of G that is homomorphic to $Q$ \cite{DBLP:gStore}. Thus, answering a SPARQL query is equivalent to finding all subgraph matches of $Q$ over RDF graph $G$. The set of all matches for $Q$ over $G$ is denoted as $\llbracket Q \rrbracket_G$

In this work, we study a query workload-driven fragmentation. A query workload $\mathcal{Q}=\{Q_1,Q_2,...,Q_q\}$ is a set of queries that users input in a given period.

\subsection{Fragmentation $\&$ Allocation}
In this paper, we study an efficient distributed SPARQL query engine. There are many issues related to distributed database system design, but, the focus of this work is ``data fragmentation and allocation'' for RDF repository. We formalize two important problems as follows.

\begin{definition}\textbf{(Fragmentation)} Given an RDF graph $G$, a \emph{fragmentation} $\mathcal{F}$ of $G$ is a set of  graphs $\mathcal{F}=\{F_1, ..., F_n\}$ such that: (1) each $F_i$ is a subgraph of $G$ and called as a \emph{fragment} of RDF graph $G$; (2) $E(F_1)  \cup ... \cup E(F_n)  = E(G)$; and (3) $V(F_1)  \cup ... \cup V(F_n)  = V(G)$, where $E(F_i)$ and $V(F_i)$ denote the edges and vertices in $F_i$ ($i=1,..,n$).
\end{definition}

In our work, we allow the overlaps between different fragments. Given a fragmentation $\mathcal{F}$, the next issue is how to distribute these fragments among different sites (i.e., computing nodes). This is called \emph{allocation}.

\begin{definition}\textbf{(Allocation)} Given a fragmentation $\mathcal{F}=\{F_1, ...,$ $F_n\}$ over an RDF graph $G$ and a set of sites $\mathcal{S} = \{S_1,S_2,...,S_m\}$ (usually $m<n$), an \emph{allocation} $\mathcal{A}=\{A_1, ..., A_m\}$ of fragments in $\mathcal{F}$ to $\mathcal{S}$ is a partitioning of $\mathcal{F}$ such that (1) $A_j\subseteq \mathcal{F}$, where $1 \le j\le m$; (2) $A_{j_1} \cap A_{j_2} = \emptyset$, where $1 \le j_1 \ne j_2 \le m$; (3) $A_1  \cup ... \cup A_m  = \mathcal{F}$; and (4) All fragments in $A_j$ are stored at site $S_j$, where $1 \le j\le m$.
\end{definition}

Given an RDF graph $G$, a query workload $\mathcal{Q}$ and a distributed system consisting of sites $\mathcal{S}$, the goal of this paper is to first decompose $G$ into a fragmentation $\mathcal{F}$ and then finding the allocation $\mathcal{A}$ of $\mathcal{F}$ to $\mathcal{S}$.

\section{Overview}\label{sec:Overview}
This paper studies a SPARQL query workload-driven data fragmentation and allocation problem. Some observations on the real query workload tell us that some RDF properties have few access frequencies. For example, few users input queries contain the properties like $imageSkyline$ and $wappen$ in Figure \ref{fig:exampleRDFGraph}. As well, the classical distributed database design suggests a ``80/20'' rule, meaning the active ``20$\%$'' of query patterns account for ``80$\%$'' of the total query input \cite{DBLP:DatabaseDesign}. Therefore, we divide the whole RDF repository into two parts: ``hot graph'' and ``cold graph'' as follows.

\begin{definition} \textbf{(Infrequent and Frequent Property)} Given a query workload $\mathcal{Q}=\{Q_1,...Q_n\}$, if a property $p$ occurs in less than $\theta$ queries in $\mathcal{Q}$, where $\theta$ is an user specified parameter, $p$ is an \emph{infrequent property}; otherwise, $p$ is a \emph{frequent property}.
\end{definition}

\begin{definition} \textbf{(Hot and Cold Graphs)} Given an edge $e=\overrightarrow{u_iu_j}\in E(G)$ with property $p$, if property $p$ is a frequent property, $e$ is a hot edge; otherwise, $e$ is a cold edge.

Given an RDF graph $G$, it is divided into two parts: \emph{hot graph} $H$ and \emph{cold graph} $C$, where $H$ consists of all hot edges and $C$ consists of all cold edges.
\end{definition}

The goal of this work is how to partition ``hot graph'' to achieve performance improvement. We regard the cold graph as a ``black block''. The cold graph does not overlap to the hot graph, since the cold graph contains different edges with different kinds of properties from the hot graph. Any existing approach can be utilized for the cold graph. We only consider the cold graph in the SPARQL query processing (Section \ref{sec:QueryProcessing}), since some queries may involve ``infrequent'' properties. Moreover, both the cold graph and the hot graph may be disconnected.

Figure \ref{fig:SystemArchitecture} illustrates our system architecture. In the offline phase, we mine the \emph{frequent access patterns} (see Section \ref{sec:FragmentPattern}) in the workload. Each frequent access pattern can correspond to one or more fragments. Generating a fragment from all matches of a frequent access pattern make many queries be answered efficiently without cross-fragments joins, while it may also replicate some hot edges and increase the space cost. Thus, we should select an appropriate subset of frequent access patterns to balance the efficiency and the space cost. Since we find out that selecting an appropriate set of patterns is a NP-hard problem (Section \ref{sec:PatternsSelection}), we propose a heuristic pattern selection solution while guaranteeing both the data integrity and the approximation ratio. Based on these selected frequent access patterns, we study two different data fragmentation strategies, i.e., vertical and horizontal fragmentation (Section \ref{sec:Fragmentation}). The vertical fragmentation is to improve the query throughput, and the horizontal fragmentation is to reduce a single query's response time. Fragments are distributed among different sites. Meanwhile, we maintain the metadata in a data dictionary.

In the online phase, we study how to decompose a query into several subqueries on different fragments and generate an efficient execution plan. A cost model for guiding decomposition is proposed (Section \ref{sec:QueryDecomposition}). Finally, we execute the plan and return the matches of the query (Section \ref{sec:QueryOptimization}).

\begin{figure}
\begin{center}
    \includegraphics[scale=0.3]{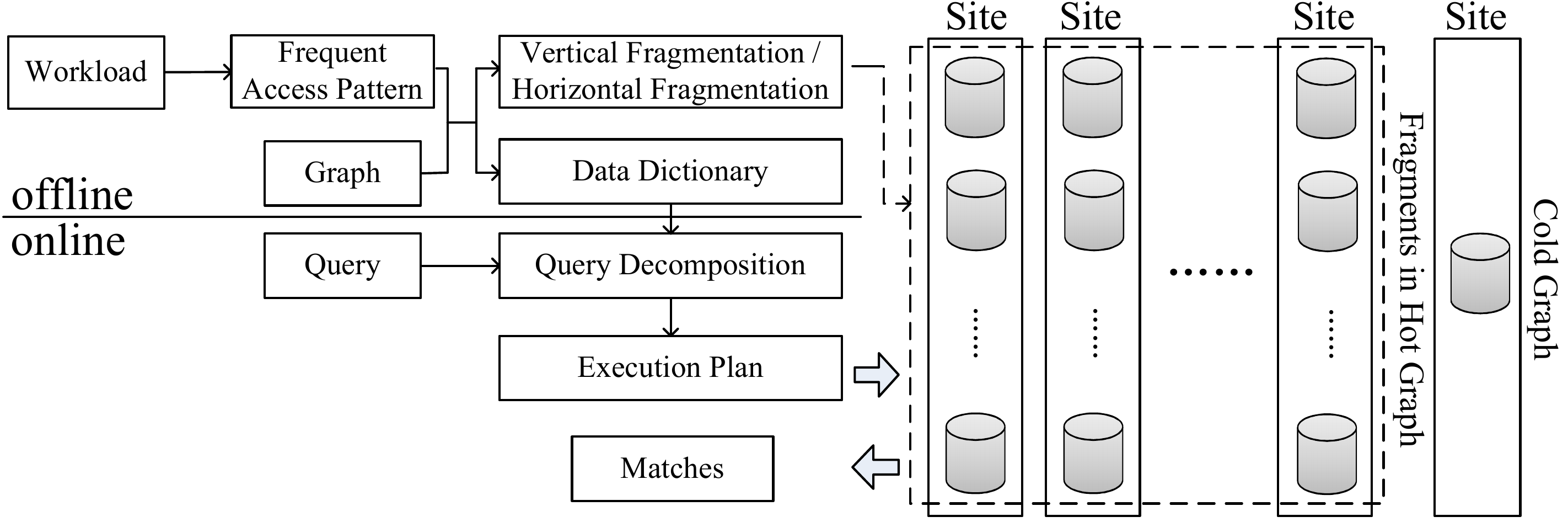}
   \caption{System Architecture}
   \label{fig:SystemArchitecture}
\end{center}
\end{figure}

\section{Frequent Access Patterns}\label{sec:FragmentPattern}
As mentioned before, we believe that a query often contains some patterns in the previously issued queries, so we mine some patterns with high access frequencies and use these patterns as the fragmentation units. Then, if a query $Q$ can be decomposed to some subgraphs isomorphic to the frequent access patterns, $Q$ can be answered while avoiding some joins across multiple fragments.

Before we mine frequent access patterns, we first normalize the query graphs in the workload to avoid overfitting. For each SPARQL query, we remove all constants (strings and URIs) at subjects and objects and replace them with variables. The FILTER expressions are also removed. By doing this, we extract a general representation of a SPARQL query from the workload. Figure \ref{fig:generalizedSPARQL} shows the generalized query graphs of query graphs in Figure \ref{fig:ExampleSPARQL}. We assume that the generalized query in Figure \ref{fig:generalizedSPARQL} graphs are also frequent access patterns.

\begin{figure}
\begin{center}
    \includegraphics[scale=0.25]{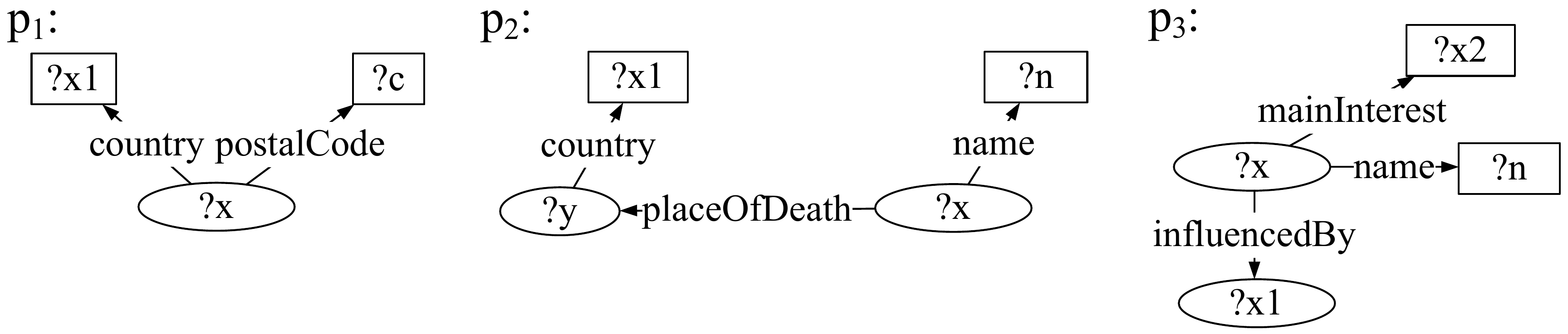}
   \caption{Example Frequent Access Patterns}
   \label{fig:generalizedSPARQL}
\end{center}
\end{figure}

To mine patterns with high access frequencies, we need to first count the number of queries in the workload where a pattern $p$ is a subgraph. We define the \emph{frequent access pattern usage value} to record the access frequencies of the frequent access patterns.

\begin{definition}\label{def:PropertyUsageValue}
\textbf{(Frequent Access Pattern Usage Value)}
Given a SPARQL query $Q$ and a frequent access pattern $p$, we associate a \emph{frequent access pattern usage value}, denoted as $use(Q, p)$, and defined as follows:
$$ use(Q, p)=\left\{
\begin{aligned}
1  &      &  if\ pattern\ p\ is\ a\ subgraph\ of\ Q\\
0  &      &  otherwise
\end{aligned}
\right.
$$
\end{definition}

Then, given a workload $\mathcal{Q}=\{Q_1,Q_2,..., Q_q\}$ and a pattern $p$, we define the \emph{access frequency}, $acc(p)$, as the number of queries in $\mathcal{Q}$ where a pattern $p$ is a subgraph.
\[acc(p)=\sum_{k=1}^{q} {use(Q_k,p)}\]
A pattern $p$ is \emph{frequent access pattern} if its access frequency is no less than a threshold, $minSup$. \nop{As one can see, frequent access pattern is a relative concept. Whether a graph is frequent depends on the setting of $minSup$.}

The frequent access patterns can be easily generated by existing frequent graph mining algorithms \cite{DBLP:Gaston}. Given a workload of SPARQL queries $\mathcal{Q} = \{Q_1,Q_2,...,$ $Q_q\}$ in a given period, we denote the set of frequent access patterns that we find as ${P}=\{p_1,p_2,...,p_x\}$. In practice, the size of ${P}$ is often limited. For example, if we set $minSup$ as $0.1\%$ of the total access frequency, there are only $163$ frequent access patterns for DBPedia.

\subsection{Frequent Access Pattern Selection}\label{sec:PatternsSelection}
Obviously, it is not necessary to generate fragments from all frequent access patterns due to high space cost. For two similar frequent access patterns $p$ and $p^\prime$, if they are contained by similar queries of the workload, then selecting both $p$ and $p^\prime$ for building fragments will not be able to provide more information than selecting one of $p$ and $p^\prime$. Hence, it is often sufficient to only select a subset of all frequent access patterns to generate fragments.

To select a subset of all frequent access patterns, there are two factors that we should consider.
\begin{enumerate}
\item (\emph{Hitting the Whole Workload}) We should select frequent access patterns to hit the query workload as much as possible. This is because that when we select a frequent access pattern to generate a fragment, all queries isomorphic to this pattern can be answered directly, which improve the efficiency.
  \item (\emph{Satisfying the Storage Constraint}) The total storage of the system in real applications is limited, so selecting too many frequent access patterns is not desirable.
\end{enumerate}

The above two factors contradict each other. Hitting the whole workload requires to select as many frequent access patterns as possible, while the storage constraint requires to select not too many frequent access patterns. There should be a tradeoff between the two factors.
In the following, we propose a cost model to combine these two factors for selecting a set of frequent access patterns.

\subsubsection{Hitting the Whole Workload} If a fragment is generated from the graph induced by matches of a frequent access pattern, then evaluating all queries containing the pattern can be speeded up by using this fragment. The more queries a frequent access pattern hits, the more gains we obtain during query processing. Therefore, the \emph{benefit} of selecting a frequent access pattern to generate its corresponding fragment should be defined based on the number of queries that the frequent access pattern hits.

In addition, if two similar frequent access patterns are contained by the same set of queries in the workload, it is probably wise to include only one of them. Generally speaking, among similar frequent access patterns contained by the same number of queries, it is often sufficient to materialize only the largest frequent access pattern. That is to say, if $p^\prime$, a subgraph of $p$, is contained by the same set of queries as $p$, $p$ is more beneficial than $p^\prime$ to be selected as building fragments. This is because that if we select the larger pattern, a query is more probable to be decomposed to fewer number of subqueries during query processing. Fewer subqueries can avoid some distributed joins, which can improve the efficiency of query processing.

The above observation implies that larger frequent access patterns are more beneficial to be selected as building fragments. This above criterion on the selection of frequent access patterns is formally defined as \emph{size-increasing benefit}.

\begin{definition}\label{def:PatternBenefit}\textbf{(Size-increasing Benefit)}
Given a frequent access pattern $p$, the benefit of selecting $p$ for hitting the query ${Q}$, $Benefit(p, {Q})$, is denoted as follows.
\[Benefit(p, {Q})={|E(p)|} \times use(Q, p)\]
\end{definition}

Furthermore, a query in the workload may contain multiple selected frequent access patterns. This means that the query can be decomposed into multiple sets of subqueries if we evaluate the query. Each set of subqueries can map to an execution plan. Since only one execution plan is finally selected to evaluate the query, a query in the workload should only be limited to contribute to the benefits of some particular frequent access patterns once. Based on this observation, we limit a query to only contribute the largest frequent access pattern that the query contains.

\begin{definition}\label{def:Benefit}\textbf{(Benefit of a Frequent Access Pattern Set)}
Given a set of frequent access patterns ${P}^\prime \subseteq {P}$, the benefit of selection of ${P}^\prime$ over the workload $\mathcal{Q}$ is the sum of the maximum benefit of its frequent access patterns over $\mathcal{Q}$.
\[Benefit(P^\prime, \mathcal{Q}) =  \sum_{Q\in \mathcal{Q}} \max_{p \in P^\prime} \{Benefit(p, {Q})\}\]
\end{definition}

\subsubsection{Satisfying the Storage Constraint}
Furthermore, the total storage of the system in real applications is limited, so selecting too many frequent access patterns is not desirable. The selection of frequent access patterns should meet some constraints. When the size of all fragments is larger than the storage constraint, we cannot further select any more frequent access patterns. We normalize the storage capacity of the system to a value $SC$. Then, we have the constraint as:
\[\sum\limits_{p\in {P}^\prime}|\llbracket p \rrbracket|\times |E(p)| \le SC\]
Here, we assume that $SC$ is larger than the number of edges in the hot graph, so each hot edge can have at least one copy. This assumption guarantees the completeness of the RDF graph.

\nop{
If two frequent access patterns share a common substructure, their corresponding fragments have some redundant edges. This increase the space cost. Hence, we define the cost of the selection of a set of frequent access patterns as the number of redundant edges that these patterns share.

Given a set of frequent access patterns ${P}^\prime \subseteq {P}$, the cost $Cost(P^\prime)$ of the selection of $P^\prime$ is defined as the number of replicated edges for $P^\prime$.
\[Cost(P^\prime)=\sum\limits_{p\in {P}^\prime}|E(\llbracket p \rrbracket_G)|-|\bigcup\limits_{p\in P^\prime}E(\llbracket p \rrbracket_G)|\]
where $\llbracket p \rrbracket_G$ denotes all substructures that matches $p$, and $E(\llbracket p \rrbracket_G)$ is the set of edges in any substructures that matches $p$
}

\subsubsection{Combining the Two Factors}\label{sec:CombiningFactors}
\nop{
We combine the above two factors together to define the benefit function of the selection of a set of frequent access patterns $P^\prime$.
\begin{equation}\label{equation:benefit}
Benefit(P^\prime)= Gain(P^\prime) - \gamma \times Cost(P^\prime)
\end{equation}
where $\gamma$ is a parameter not smaller than 0. We will discuss the settings of parameters $\gamma$ in Section \ref{sec:expsettingalpha}.
}

Then, our optimization objective is to maximize the benefit subject to the storage constraint. We can prove that this benefit function (Definition \ref{def:Benefit}) is submodular as follows, so this problem is NP-hard.

\begin{theorem}\label{theorem:Submodularity} Finding a set of frequent access patterns with the largest benefit while subject to the storage constraint is NP-hard.
\end{theorem}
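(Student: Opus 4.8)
The plan is to establish NP-hardness by a reduction from a known NP-hard problem, most naturally the \emph{maximum coverage problem} (or, equivalently, the budgeted maximum coverage problem, since we have a non-uniform storage budget rather than a cardinality bound). In maximum coverage we are given a ground set of elements, a family of subsets, and an integer $k$, and we must pick $k$ subsets maximizing the number of covered elements; its budgeted variant assigns a cost to each subset and bounds the total cost. First I would set up the correspondence: elements of the ground set correspond to queries $Q \in \mathcal{Q}$ in the workload, and each candidate subset corresponds to a frequent access pattern $p \in P$, with $p$ ``covering'' exactly those queries $Q$ for which $use(Q,p)=1$, i.e., the queries that contain $p$ as a subgraph. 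The cost of choosing $p$ is its fragment size $|\llbracket p \rrbracket| \times |E(p)|$, and the storage bound $SC$ plays the role of the budget. The subtlety is that $Benefit(P',\mathcal{Q})$ is not a plain coverage count: each covered query $Q$ contributes $\max_{p\in P'}\{|E(p)| \cdot use(Q,p)\}$ rather than $1$. To handle this I would engineer the reduction instance so that all patterns have the same edge count $|E(p)|=c$ for a fixed constant $c$; then for every query the $\max$ over selected patterns that contain it is simply $c$, so $Benefit(P',\mathcal{Q}) = c \cdot |\{Q : \exists p\in P', use(Q,p)=1\}|$, which is exactly $c$ times the coverage objective. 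Maximizing benefit under the storage constraint then coincides with budgeted maximum coverage, giving NP-hardness.

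The main technical step is constructing, from an arbitrary maximum-coverage (or set-cover decision) instance, an actual RDF graph $G$, a workload $\mathcal{Q}$ of SPARQL query graphs, and a set of frequent access patterns so that the subgraph-containment relation $use(Q,p)$ realizes the prescribed incidence structure while keeping every $|E(p)|$ equal and making all patterns genuinely frequent (access frequency $\geq minSup$). I would do this with simple gadgets: use distinct edge labels to make subgraph isomorphism behave like set membership (so that pattern $p_i$ is a subgraph of query $Q_j$ if and only if we want element $j$ to be in set $i$), pad each pattern with the same number of edges using a pattern-specific ``filler'' label so sizes agree, and replicate queries if necessary to meet $minSup$. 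One also has to choose the match multiplicities $|\llbracket p \rrbracket|$ (via how many copies of each pattern's image we place in $G$) so that the fragment sizes equal the desired costs and the budget $SC$ matches. The assumption stated in the excerpt that $SC$ exceeds the number of hot edges must be respected; since we are free to scale up the whole construction, this is not a real obstruction.

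The part I expect to be most delicate is the interaction between the ``largest pattern wins'' $\max$ in Definition~\ref{def:Benefit} and the requirement, implicit in the paper's framing, that the selected patterns be closed under the data-integrity condition (every hot edge has a copy). Rather than fight this, I would observe that we can add one ``universal'' pattern of cost just under the budget that covers no query but certifies integrity, or more cleanly, note that NP-hardness is preserved if we prove it for the core optimization problem and then argue the integrity constraint only shrinks the feasible region in a way that still contains the coverage-optimal solutions in our constructed instances. Finally, although the theorem statement only claims NP-hardness, the surrounding text asserts that $Benefit(\cdot,\mathcal{Q})$ is submodular; I would include a short verification of that — monotonicity is immediate since adding a pattern can only increase each per-query $\max$, and submodularity (diminishing returns) follows because for a fixed query the marginal gain of adding $p$ is $\max(0,\, |E(p)|\,use(Q,p) - \max_{p'\in P'}|E(p')|\,use(Q,p'))$, which is nonincreasing in $P'$, and a sum of submodular functions is submodular — since this is what motivates the greedy algorithm referenced in Theorem~\ref{theorem:ApproximationRatio}.
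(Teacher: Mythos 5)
Your proposal takes a genuinely different route from the paper. The paper's entire proof consists of verifying that $Benefit(P^\prime,\mathcal{Q})$ is submodular (via a case analysis on three classes of queries: those containing no pattern of $P_2$, those containing a pattern of $P_2-P_1$, and those containing only patterns of $P_1$) and then invoking the fact that maximizing submodular functions is NP-hard. You instead propose an explicit reduction from budgeted maximum coverage, with queries as ground elements, patterns as sets, fragment sizes as costs, and the trick of equalizing all $|E(p)|$ so that the per-query $\max$ collapses to a coverage indicator. Your route is, strictly speaking, the logically sounder one for the stated claim: exhibiting a problem as an \emph{instance} of submodular maximization, as the paper does, only shows it is no harder than that class, whereas NP-hardness requires reducing a hard problem \emph{to} it, which is exactly what your coverage reduction does. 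Your closing verification of submodularity (each $f_Q(P^\prime)=\max_{p\in P^\prime}|E(p)|\,use(Q,p)$ is a max over selected weights, hence submodular, and a sum of submodular functions is submodular) is also cleaner than the paper's case split and supplies what Theorem~\ref{theorem:ApproximationRatio} actually needs. What your plan still leaves open is the gadget-level realization: the candidate set $P$ is defined by mining the workload, so you must argue that no unintended frequent patterns (in particular sub-patterns of your $p_i$ with fewer edges and lower cost) perturb the correspondence between optimal benefit and optimal coverage, and that your label-disjoint, filler-padded queries remain connected as Section~\ref{sec:background-RDF-SPARQL} requires. These are resolvable with the unique-label devices you describe, but they are the places where the written-out proof would need care; the paper sidesteps them entirely at the price of the direction-of-reduction gap noted above.
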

\begin{proof} Here, we prove that the benefit function $Benefit(P^\prime,\mathcal{Q})=\sum_{Q\in \mathcal{Q}} \max_{p \in P^\prime} \{{|E(p)|} \times use(Q, p)\}$ is submodular. In other words, for every $P_1\subseteq P_2$ and a frequent access pattern $p\notin P_2$, we need to prove that $\triangle_{Benefit}(p|P_1)\ge \triangle_{Benefit}(p|P_2)$.

For pattern $p$, we assume that $\mathcal{Q}^\prime$ is the set of queries containing $p$ in the workload. There are three kinds of queries in $\mathcal{Q}^\prime$: the set $\mathcal{Q}_1$ of queries not containing any patterns in $P_2$, the set $\mathcal{Q}_2$ of queries containing patterns in $(P_2-P_1)$, and the set $\mathcal{Q}_3$ of queries only containing patterns in $P_1$.

Since any query in $\mathcal{Q}_1$ and $\mathcal{Q}_3$ does not concern patterns in $(P_2-P_1)$, $Benefit(\{p\}\cup P_1, \mathcal{Q}_1\cup \mathcal{Q}_3)=Benefit(\{p\}\cup P_2, \mathcal{Q}_1\cup \mathcal{Q}_3)$. Hence, the marginal gains of $p$ for $P_1$ and $P_2$ over $\mathcal{Q}_1$ and $\mathcal{Q}_3$ are the same.

For $\mathcal{Q}_2$, $\triangle_{Benefit}(p|P_1)> \triangle_{Benefit}(p|P_2)$, if there exist at least one query $Q^*$ meeting all the two following conditions: 1) the largest pattern contained by $Q^*$ over $P_2$ is in $(P_2-P_1)$ and has larger size than $p$; 2) the largest pattern contained by $Q^*$ over $P_1$ has smaller size than $p$. The above two conditions mean that $p$ can only increase the benefit of $P_1$ over $\mathcal{Q}_2$ but not the benefit of $P_2$ over $\mathcal{Q}_2$. Otherwise, for $\mathcal{Q}_2$, $\triangle_{Benefit}(p|P_1)= \triangle_{Benefit}(p|P_2)$.

In conclusion, $\triangle_{Benefit}(p|P_1)\ge \triangle_{Benefit}(p|P_2)$ and the function $Benefit(P^\prime, \mathcal{Q})$ is submodular. Since the problem of maximizing submodular functions is NP-hard \cite{DBLP:submodular}, the problem is NP-hard.
\end{proof}

\subsubsection{Our Solution}\label{sec:PatternSelectionSolution}
As proved in Theorem \ref{theorem:Submodularity}, frequent access pattern selection is NP-complete problem. We propose a greedy algorithm as outlined in Algorithm \ref{theorem:Submodularity}. Note that, to guarantee data integrity of distributed RDF data fragmentation, each hot edge should be contained in at least one fragment. Hence, we initialize a pattern of one edge for each frequent property and compute out its corresponding fragment (Line 3-6).

\nop{Note that, when we select the patterns of one edge, the benefit of the selection always increases. This is because that one edge only has one property and different patterns with one edge cannot share the same edge. }

After we select all patterns with one edge, we enumerate all feasible frequent access pattern sets containing one pattern of more than one edge. Let $P_1$ be a feasible set of cardinality one that has the largest benefit (Line 7). Then, we iteratively select one of the remaining frequent access patterns $p^\prime$ to maximize the value of $\frac{{Benefit}(\{p^\prime\}\cup P^\prime,\mathcal{Q}) - {Benefit}( P^\prime,\mathcal{Q})}{|E(\llbracket p^\prime \rrbracket_G)|}$ until we meet the storage constraint or cannot find a frequent access pattern to increase the benefit (Line 8-14). Let $P_2$ be the solution obtained in the iterative phase. Finally, the algorithm outputs $P^\prime\cup P_1$ if ${Benefit}( P^\prime\cup P_1,\mathcal{Q})\ge {Benefit}( P^\prime\cup P_2,\mathcal{Q})$ and $P^\prime\cup P_2$ otherwise (Line 15-17).

\begin{algorithm}[h] \label{alg:PatternSelectionSolution}
\caption{Frequent Access Pattern Selection Algorithm}
\small
\KwIn{A set of frequent access patterns ${P}=\{p_1,p_2,...,p_x\}$}
\KwOut{ A set ${P^\prime}\subseteq P$ to generate fragments}

${P^\prime}\gets \emptyset$;\\
$TotalSize\gets 0$;\\
\For{each $p\in P$ and $p$ has only one edge}
{
    ${P^\prime}\gets {P^\prime}\cup \{p\}$;\\
    ${P}\gets {P}- \{p\}$;\\
    $TotalSize\gets TotalSize + |E(\llbracket p \rrbracket_G)|$;\\
}
$P_1 \gets argmax\{\frac{{Benefit}( \{p_i\},\mathcal{Q})}{|E(\llbracket p_i \rrbracket_G)|}:p_i\in P, |E(\llbracket p_i \rrbracket_G)| + TotalSize \le SC \land |E(p_i)|>1\}$;\\
${P_2}\gets \emptyset$;\\
$TotalSize^\prime \gets 0$;\\
\While{$TotalSize^\prime \le SC-TotalSize$}
{
    Find the frequent access pattern $p^\prime\in P-P^\prime$ with the largest additional value of $\frac{{Benefit}(\{p^\prime\}\cup P^\prime,\mathcal{Q}) - {Benefit}( P^\prime,\mathcal{Q})}{|E(\llbracket p^\prime \rrbracket_G)|}$;\\
    ${P_2}\gets {P_2}\cup \{p^\prime\}$;\\
    ${P}\gets {P}- \{p^\prime\}$;\\
    $TotalSize^\prime\gets TotalSize^\prime + |E(\llbracket p^\prime \rrbracket_G)|$;\\
}
\If{${Benefit}( P^\prime\cup P_1,\mathcal{Q})\ge {Benefit}( P^\prime\cup P_2,\mathcal{Q})$}{
    Return ${P^\prime\cup P_1}$;
}
Return $P^\prime\cup P_2$;
\end{algorithm}

\begin{theorem}\label{theorem:ApproximationRatio} Algorithm \ref{alg:PatternSelectionSolution} obtains a set of frequent access patterns of benefit at least $\min\{\frac{1}{(\max_{p\in P}{|E(p)|})},\frac{1}{2}(1-\frac{1}{e})\}$ times the value of an optimal solution.
\end{theorem}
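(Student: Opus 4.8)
# Proof Proposal

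The plan is to analyze the two candidate solutions $P' \cup P_1$ and $P' \cup P_2$ separately, since Algorithm~\ref{alg:PatternSelectionSolution} returns whichever is larger, and show that at least one of them always achieves the claimed fraction of the optimum. Throughout I will write $f(S) = Benefit(S, \mathcal{Q})$, which we already know from Theorem~\ref{theorem:Submodularity} is monotone submodular, and I will write $c(p) = |E(\llbracket p \rrbracket_G)|$ for the ``cost'' (size of the induced fragment) of pattern $p$, so the storage constraint is a knapsack constraint $\sum_{p \in S} c(p) \le SC$. Let $OPT$ denote an optimal feasible set and $f(OPT)$ its benefit. Note that the mandatory single-edge patterns $P'$ are in every feasible solution of interest (they are forced for data integrity), so it is cleanest to work with the residual submodular function $g(S) = f(P' \cup S) - f(P')$ over patterns with more than one edge, and the residual budget $SC - TotalSize$; this $g$ is still monotone submodular.

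First I would handle the greedy-with-knapsack part $P_2$. This is the classical setting of maximizing a monotone submodular function under a knapsack constraint by the cost-effective greedy rule (pick the $p'$ maximizing the marginal-gain-per-unit-cost), which is exactly Line~11 of the algorithm. The standard analysis (Khuller--Moss--Naor / Sviridenko style, or the simpler Leskovec et al.\ argument) shows that if $p^*$ is the first element the greedy procedure \emph{would} have taken but could not fit, then the greedy solution together with $p^*$ has value at least $(1 - 1/e)\,g(OPT')$, where $OPT'$ is the optimum of the residual problem. Combined with the separate singleton candidate $P_1$ — which by construction is the best single element per unit cost — one gets the familiar guarantee that $\max\{g(P_1 \setminus P'), g(P_2)\} \ge \tfrac12(1 - 1/e)\,g(OPT')$. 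Pulling $P'$ back in and using monotonicity ($f(OPT) \le f(P' \cup OPT')$ up to the forced edges) yields the $\tfrac12(1-\tfrac1e)$ branch of the bound. I would cite the submodular-knapsack literature for the clean inequality rather than rederiving it.

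The second branch, the $\frac{1}{\max_{p \in P} |E(p)|}$ factor, is the safety net for the degenerate case where the greedy process terminates early — i.e.\ where it ``cannot find a frequent access pattern to increase the benefit'' before exhausting the budget, so the $(1-1/e)$ argument's hypothesis (that some next element was blocked by the knapsack) fails. Here I would argue directly: because of the size-increasing benefit definition (Definition~\ref{def:PatternBenefit}), $Benefit(p,Q) = |E(p)| \cdot use(Q,p)$, so every query $Q$ hit by the solution contributes at least $1$ (the smallest possible $|E(p)|$ for a selected pattern is $\ge 1$, indeed the forced single-edge patterns already guarantee each hit query contributes at least $1$), while in $OPT$ a query contributes at most $\max_{p \in P} |E(p)|$. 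Since the forced single-edge patterns $P'$ already hit every query that any pattern in $P$ (hence in $OPT$) hits — every edge of the hot graph is covered by some $p \in P'$, so every query over frequent properties contains at least one selected single-edge pattern — we get $f(P') \ge (\text{number of queries hit by }OPT) \ge f(OPT) / \max_{p\in P}|E(p)|$, and the algorithm's output contains $P'$ and is only larger by monotonicity. Taking the better of the two branches gives the stated $\min$.

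The main obstacle I anticipate is making the first branch fully rigorous: the cost-effective greedy rule for submodular knapsack does \emph{not} by itself give $(1-1/e)$ without the "plus the blocked element $p^*$" trick, and one must be careful that the algorithm as written does not explicitly add back $p^*$ — so I would need to check whether the early-termination clause and the $P_1$ vs.\ $P_2$ comparison together actually cover the case $p^*$ would handle, or whether the bound should really be stated with the $\min$ precisely to absorb that gap. A secondary subtlety is that $g$ is submodular but the per-unit-cost marginal ratios are not monotone across iterations, so the partial-summation step in the KMN-type proof must be done with the ratios evaluated at the step each element was chosen; I would set up that telescoping carefully. Everything else — monotonicity, the forced-edge covering argument, and the final $\min$ — is routine once these pieces are in place.
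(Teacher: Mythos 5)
Your proposal follows essentially the same route as the paper: the $\frac{1}{\max_{p\in P}|E(p)|}$ branch comes from the forced single-edge patterns hitting every query that any pattern hits, the $\frac{1}{2}(1-\frac{1}{e})$ branch comes from the standard cost-effective-greedy-plus-best-singleton analysis for monotone submodular maximization under a knapsack constraint (which the paper obtains by directly citing Iyer and Bilmes), and the final guarantee is the minimum of the two. If anything, your treatment is the more careful one --- your covering argument replaces the paper's hand-wavier claim that the single-edge benefit is ``approximately'' $|\mathcal{Q}|$, and your worry about the blocked element $p^*$ is exactly the gap the $P_1$-versus-$P_2$ comparison in the algorithm is meant to absorb.
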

\begin{proof}
There are two parts in Algorithm \ref{alg:PatternSelectionSolution}: initialization and greedy selection of frequent access patterns.

For initialization (Line 3-6 in Algorithm \ref{alg:PatternSelectionSolution}), all selected patterns only contain one edge, so $|E(p)| = 1$. Therefore, the benefit of patterns only having one edge of a frequent property is $\sum_{Q\in \mathcal{Q}} \max_{p \in P^\prime} \{ 1 \times use(Q, p)\}$. Since the hot edges hit almost all queries in the workload, $\sum_{Q\in \mathcal{Q}} \max_{p \in P^\prime} \{ 1 \times use(Q, p)\}$ is approximately equal to the size of the workload, $|\mathcal{Q}|$. On the other hand, in the worst case, the optimal solution is that all queries in the workload contain the largest frequent access pattern. Then, the benefit of the optimal solution is $\sum_{Q\in \mathcal{Q}} \{ |E(p_{max})| \times use(Q, p)\}$, where $p_{max}$ is the frequent pattern with the largest size. Hence, the benefit of the selected patterns in the initial phase is at least $\frac{1}{(\max_{p\in P}{|E(p)|})}$ of the optimal benefit.

For the phase of greedily selecting frequent access patterns (Line 7-14 in Algorithm \ref{alg:PatternSelectionSolution}), since the problem of selecting the optimal set of frequent access patterns is a problem of maximizing a submodular set function subject to a knapsack constraint as discussed in Theorem \ref{theorem:Submodularity}, we directly apply the greedy algorithm in \cite{DBLP:journals/corr/IyerB13a} to iteratively select frequent access patterns. \cite{DBLP:journals/corr/IyerB13a} proves that the worst-case performance guarantee of the greedy algorithm is $\frac{1}{2}(1-\frac{1}{e})$, so the benefit of the selected patterns in this phase is at least $\frac{1}{2}(1-\frac{1}{e})$ of the optimal benefit.

In summary, the final performance guarantee of our algorithm is $\min\{\frac{1}{(\max_{p\in P}{|E(p)|})},\frac{1}{2}(1-\frac{1}{e})\}$.
\end{proof}

\section{Fragmentation}\label{sec:Fragmentation}
In this section, we present two fragmentation strategies: vertical and horizontal.

\subsection{Vertical Fragmentation}\label{sec:VerticalFragmentation}
For vertical fragmentation, we put matches homomorphic to the same frequent access pattern into the same fragment. Because a query graph often only contains a few frequent access patterns and matches of one frequent access pattern are put together, other irrelevant fragments can be filtered out during query evaluation and only sites stored relevant fragments need to be accessed to find matches. Filtering out irrelevant fragments can improve the query performance. Furthermore, sites not storing relevant fragments can be used to evaluate other queries in parallel, which improves the total throughput of the system. In summary, the vertical fragmentation strategy utilizes the locality of SPARQL queries to improve both query response time and throughput. Experimental results in Section \ref{sec:Experiment} also confirm the above argument.

Given a frequent access pattern $p$, it can then be transformed into a SPARQL query, resulting in a vertical fragment of the RDF graph. We use the results $\llbracket p \rrbracket_G$ of a selection operation based on $p$ to generate a vertical fragment. All vertical fragments generated from our selected frequent access patterns construct a vertical fragmentation. Given a set of frequent access patterns $P$, we formally define its corresponding vertical fragmentation over an RDF graph $G$ as follows.

\begin{definition}\label{def:verticalfragmentation} \textbf{(Vertical Fragmentation)}
Given an RDF graph $G$ and a frequent access pattern $p$, a \emph{vertical fragment} $F$ generated from $p$ is defined as $F=\{V(F),$ $E(F),L^{\prime\prime} \}$, where (1) $V(F)\subseteq V(G)$ is the set of vertices occurring in $\llbracket p \rrbracket_G$; (2) $E(F) \subseteq E(G)$ is the set of edges occurring in $\llbracket p \rrbracket_G$; and (3) $L^{\prime\prime} \subseteq L$ is the set of edge labels occurring in $\llbracket p \rrbracket_G$.

Then, given a set of frequent access patterns ${P}=\{p_1,p_2,...,p_x\}$, the corresponding \emph{vertical fragmentation} is $\mathcal{F}=\{F_i|0\le i \le x$ and $F_i$ is the vertical fragment generated from $p_i$.$\}$
\end{definition}

\begin{example}
Given the frequent access pattern $p_3$ in Figure \ref{fig:generalizedSPARQL}, Figure \ref{fig:ExampleVerticalFragment} shows the corresponding vertical fragment.
\end{example}

\begin{figure}[h]
\begin{center}
    \includegraphics[scale=0.31]{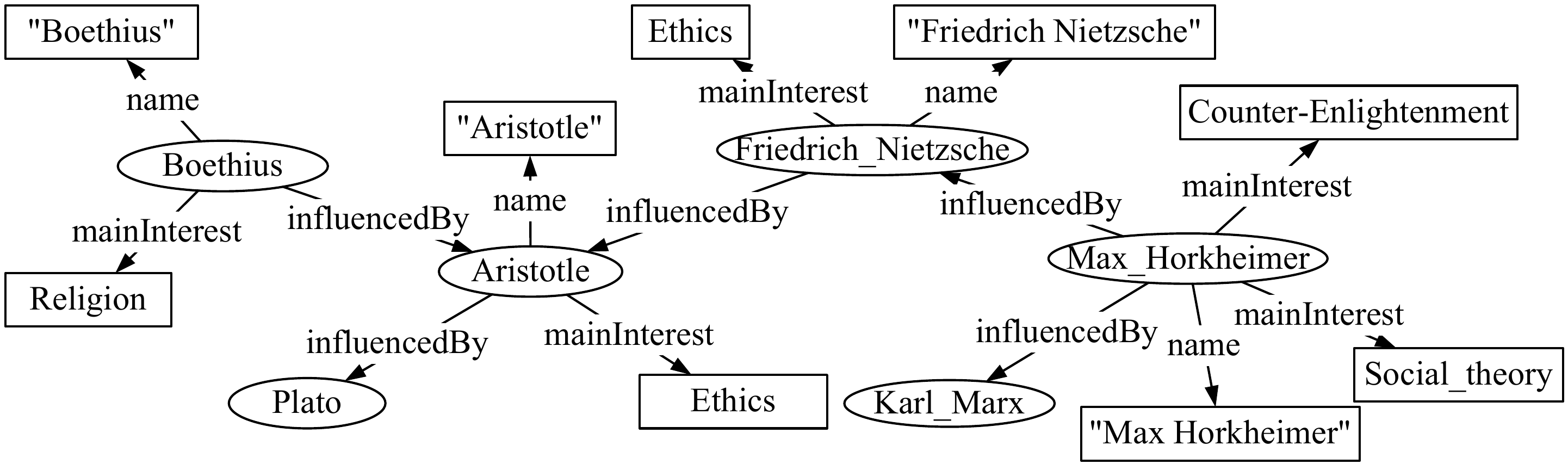}
    \vspace{-0.3in}
   \caption{Example Vertical Fragment}
   \label{fig:ExampleVerticalFragment}
\end{center}
\end{figure}
\vspace{-0.1in}

\subsection{Horizontal Fragmentation}\label{sec:HorizontalFragmentation}
For horizontal fragmentation, we put matches of one frequent access pattern into the different fragments and distribute them among different sites. Then, a query may involve many fragments and each fragment has a few matches. The size of a fragment is often much smaller than the size of the whole data, so finding matches of a query over a fragment explores smaller search space than finding matches over the whole data. If the fragments involved by a query are allocated to different sites, then each site finds a few matches over some fragments with the smaller size than the whole data. This strategy is to utilize the parallelism of clusters of sites to reduce the query response time. The above argument is also confirmed by the experimental results in Section \ref{sec:Experiment}.

In this section, we extend the concepts of \emph{simple predicate} and \emph{minterm predicate} originally developed for relational systems \cite{DBLP:distributedRDBMS} to divide the RDF graph horizontally.

\subsubsection{Structural Minterm Predicate }\label{sec:MintermPredicate}

First, we define the structural simple predicate. Each structural simple predicate corresponds to a frequent access pattern with a single (in)equality. Given a frequent access pattern $p$ with variables set $\{var_1,var_2,...,var_n\}$, a structural simple predicate $sp$ defined on $D$ has the following form.
\[sp:p({var_i})\ \theta \ Value \]
where $\theta \in \{=, \ne\}$ and $Value$ is a constant constraint for $var_i$ chosen from a query containing $p$ in $\mathcal{Q}$.

\begin{example}\label{example:SimplePredicates}
Let us consider the query graph $Q_3$ in Figure \ref{fig:ExampleSPARQL} and its corresponding frequent access pattern $p_3$ in Figure \ref{fig:generalizedSPARQL}. We can generate four structural simple predicates:
(1). $sp_1:p_3(?x1)=Aristotle$;
(2). $sp_2:p_3(?x1)\ne Aristotle$;
(3). $sp_3:p_3(?x2)=Ethics$;
(4). $sp_4:p_3(?x2)\ne Ethics$.
\end{example}

Then, we define the structural minterm predicate as the conjunction of structural simple predicates of the same frequent access pattern. We can obtain all structural minterm predicates by enumerating all possible combinations of structural simple predicates. Given a set of structural simple predicates $SP = \{sp_1, sp_2,...,, sp_y\}$ for frequent access pattern $p$, the set of structural minterm predicates $M = \{mp_1, mp_2,...,mp_z\}$ for $p$ is defined as follows.
\[M=\{mp_i| \bigwedge_{sp_k\in SP} sp^*_{k}, 1\le k \le y\} \]
where $sp^*_{k}= sp_k$ or $sp^*_{k}= \lnot sp_k$. So each structural simple predicate can occur in a structural minterm predicate either in its natural form or its negated form.

Similar to the frequent access pattern, we can also define the \emph{structural minterm predicate usage value} and \emph{access frequency} to record the access frequency of a structural minterm predicate. We can prune the minterm predicates with small access frequencies.

\begin{definition}\label{def:PropertyUsageValue}
\textbf{(Structural Minterm Predicate Usage Value)}
Given a SPARQL query $Q$ and a structural minterm predicate $mp$, we associate a \emph{structural minterm predicate usage value}, denoted as $use(Q, mp)$, and defined as follows:
$$ use(Q, mp)=\left\{
\begin{aligned}
1  &      &  if\ predicate\ mp\ is\ a\ subgraph\ of\ Q\\
0  &      &  otherwise
\end{aligned}
\right.
$$
\end{definition}

Then, given a set of SPARQL queries $\mathcal{Q} = \{Q_1,Q_2,...,Q_q\}$, we define the \emph{access frequency} of a structural minterm predicate $mp$ as follows.
\[acc(mp) = \sum _{k=1}^{k=q} {use({Q_k}, mp)} \]

In practice, there may exist many minterm predicates. It is too expensive to enumerate all minterm predicates. Therefore, we prune some minterm predicates with too small access frequencies.

Given a structural minterm predicate $mp$, it can then be transformed into SPARQL queries, resulting in a horizontal fragment of the RDF graph. We use the results $\llbracket mp \rrbracket_G$ of a selection operation based on $mp$ to generate a horizontal fragment. All horizontal fragments generated from the structural minterm predicates that we obtain construct a horizontal fragmentation. Given a set of minterm predicates $M$, we formally define its corresponding horizontal fragmentation over an RDF graph $G$ as follows.

\begin{definition}\label{def:horizontalfragmentation} \textbf{(Horizontal Fragmentation)}
Given an RDF graph $G$ and a structural minterm predicate $mp$, a \emph{horizontal fragment} $F$ generated from $mp$ is defined as $F=\{V(F),$ $E(F),L^{\prime\prime} \}$, where (1) $V(F)\subseteq V(G)$ is the set of vertices occurring in $\llbracket mp \rrbracket_G$; (2) $E(F) \subseteq E(G)$ is the set of edges occurring in $\llbracket mp \rrbracket_G$; and (3) $L^{\prime\prime} \subseteq L$ is the set of edge labels occurring in $\llbracket mp \rrbracket_G$.

Then, given a set of structural minterm predicates ${M}=\{mp_1,mp_2,$ $...,mp_y\}$, the corresponding \emph{horizontal fragmentation} is $\mathcal{F}=\{F_i|0\le i \le y$ and $F_i$ is the horizontal fragment generated from $mp_i$.$\}$
\end{definition}

\begin{example}
Given the structural simple predicates in Example \ref{example:SimplePredicates}, we can get all structural minterm predicates from frequent access pattern $p_3$ as follows:
(1). $mp_1:p_3(?x0)=Aristotle\land  p_3(?x1)=Ethics$;
(2) $mp_2:p_3(?x0)= Aristotle\land  p_3(?x1)\ne Ethics$;
(3). $mp_3:p_3(?x0)\ne Aristotle\land  p_3(?x1)=Ethics$;
(4). $mp_4:p_3(?x0)\ne Aristotle\land  p_3(?x1)\ne Ethics$.

Figure \ref{fig:ExamplePrimaryHorizontalFragments} shows all horizontal fragments generated from the above structural minterm predicates.
\end{example}

\begin{figure}
       \subfigure[][{ Example Horizontal Fragment Generated from $mp_1$}]{%
      \includegraphics[scale=0.31]{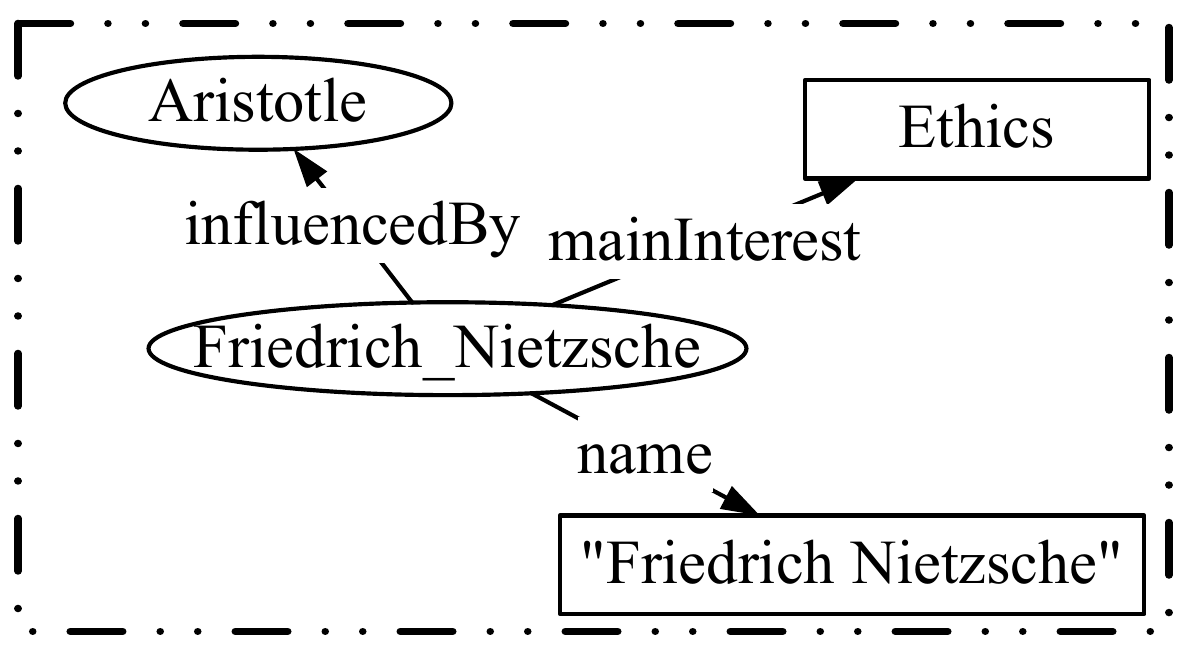}
       \label{fig:m1}%
       }
       \hspace{0.1in}
   \subfigure[][{ Example Horizontal Fragment Generated from $mp_2$}]{%
      \includegraphics[scale=0.31]{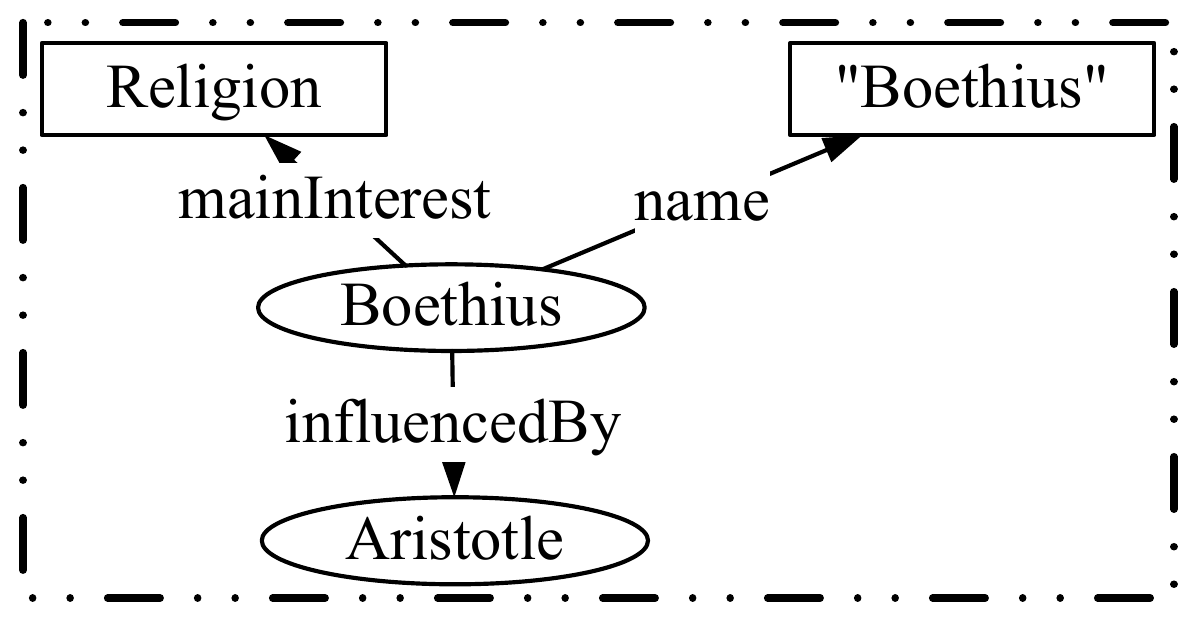}
       \label{fig:m2}%
       }%
       \\
   \subfigure[][{ Example Horizontal Fragment Generated from $mp_3$}]{%
      \includegraphics[scale=0.31]{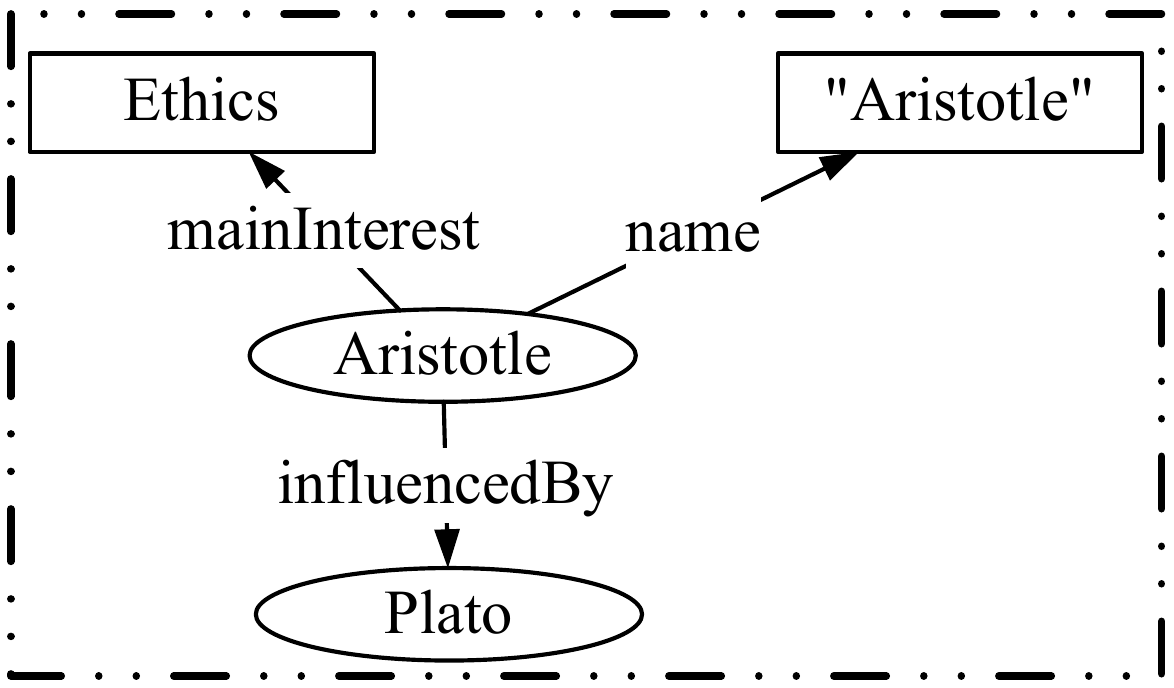}
       \label{fig:m3}%
       }%
   \subfigure[][{Example Horizontal Fragment Generated from $mp_4$}]{%
      \includegraphics[scale=0.31]{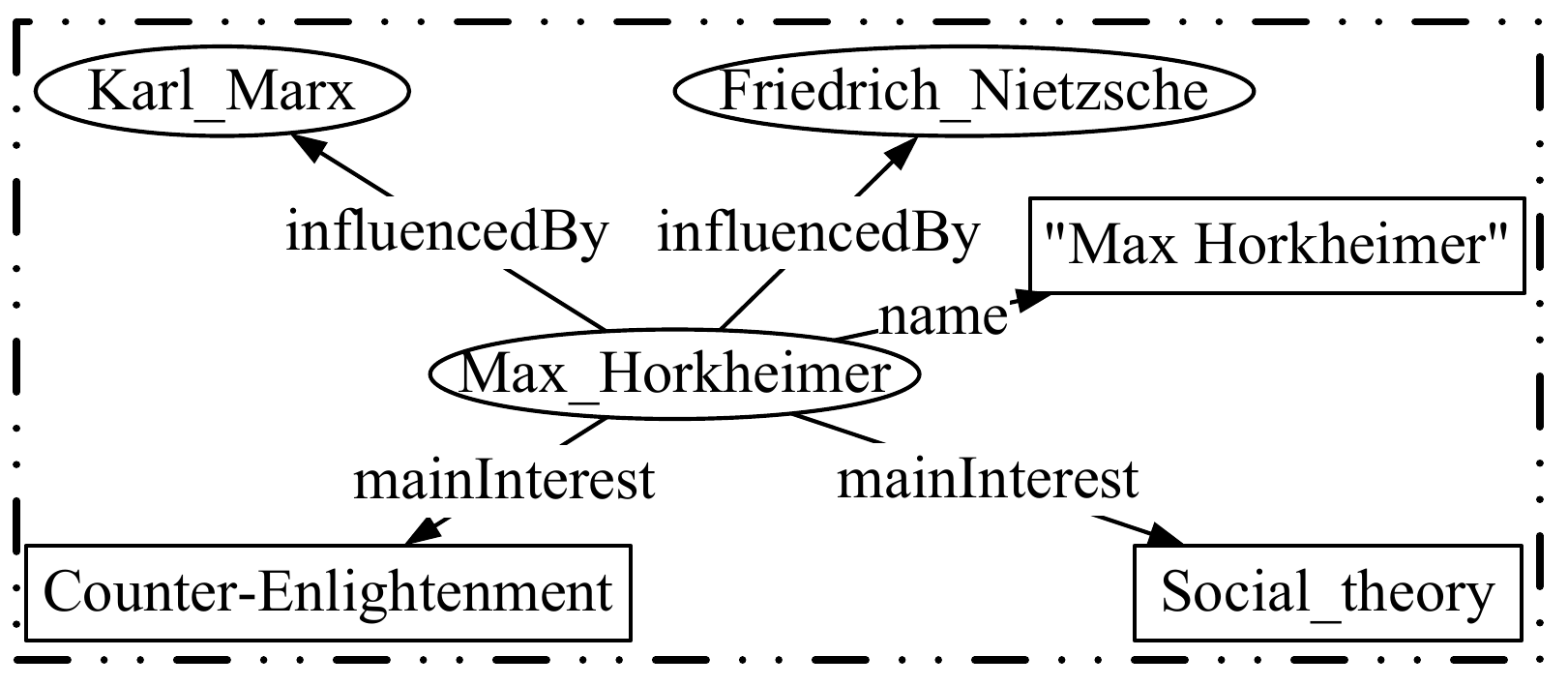}
       \label{fig:m4}%
       }
       \vspace{-0.2in}
\caption{ Example Horizontal Fragments}%
 \label{fig:ExamplePrimaryHorizontalFragments}
\end{figure}

\vspace{-0.15in}
\section{Allocation}\label{sec:Allocation}
After fragmenting the RDF graph, the next step is to allocate all fragments on several sites. In real applications, some frequent access patterns or structural minterm predicates are usually accessed together, so their corresponding fragments should be placed in one site to further avoid the cross-fragments joins. There is a need for some measures evaluating precisely the notion of ``togetherness''. This measure is the affinity of fragments, which indicates how closely related the fragments are.

We define \emph{fragment affinity metric} to measure the togetherness between two fragments generated from frequent access patterns or structural minterm predicates as follows:

\begin{definition}\label{def:applicationpropertyaffinitymeasure}
\textbf{(Fragment Affinity Metric)}
The \emph{fragment affinity metric} between two fragments $F$ and $F^\prime$ with respect to the workload $\mathcal{Q} = \{Q_1,Q_2,...,$ $Q_q\}$ is defined as follows
\begin{itemize}
  \item $aff({F},{F^\prime}) = \sum_{k=1}^{q} {use({Q_k},{p})} \times {use({Q_k},{p^\prime})}$, if $F$ and $F^\prime$ are vertical fragments generated from frequent access patterns $p$ and $p^\prime$;
  \item $aff({F},{F^\prime}) = \sum_{k=1}^{q} {use({Q_k},{mp})} \times {use({Q_k},{mp^\prime})}$, if $F$ and $F^\prime$ are horizontal fragments generated from structural minterm predicates $mp$ and $mp^\prime$;
\end{itemize}
\end{definition}

Based on the fragment affinity metric, we can show how closely related the fragments are. If the affinity metric of two fragments is large, it means that these two fragments are often involved by the same query. Some fragments are so related that they should be placed together to reduce the number of cross-sites joins. Here, we group all fragments into some clusters. The result of clustering corresponds to an allocation $\mathcal{A}$, and each cluster corresponds to an element of $\mathcal{A}$, which means that all fragments in the cluster are placed into the same site.

There are many clustering algorithms to cluster all fragments and we need to select one of them. In this paper, we extend a graph clustering algorithm, PNN \cite{DBLP:PNN}, to cluster all fragments into an allocation $\mathcal{A}=\{A_1,A_2,...,A_m\}$. All fragments of the same cluster are put into one site.

First, we build the \emph{allocation graph} as follows.

\begin{definition}(\textbf{Allocation Graph})\label{def:AllocationGraph} Given a fragmentation $\mathcal{F}=\{F_1,F_2,...,F_n\}$, the corresponding \emph{allocation graph} $AG=\{V(AG),$ $E(AG),f_W\}$ is defined as follows:
\begin{itemize}
  \item $V(AG)$ is a set of vertices that map to all fragments;
  \item $E(AG)$ is a set of undirected edges that $\overline {{v}{v^\prime}} \in E(VG)$ if and only if the fragment affinity metric between the corresponding fragments of $v$ and $v^\prime$ is larger than 0;
  \item $f_W$ is a weight function $f_W:E(AG)\to N^+$. If $v$ and $v^\prime$ correspond to fragments $F$ and $F^\prime$, $f_W(\overline {{v}{v^\prime}}) = aff({F},{F^\prime})$.
\end{itemize}
\end{definition}

Then, the allocation problem is equivalent to cluster all fragments in $m$ clusters, and all fragments in a cluster are connected in $AG$. We define the \emph{density} of a cluster $A_i$ in $AG$ to rate the quality of $A_i$ as follows.
\[\delta (A_i) = \frac{\sum\limits_{{v_i}\in A_i \land {v_j}\in A_i \land \overline {{v_i}{v_j}} \in E(AG)} {f_W(\overline {{v_i}{v_j}})}}{
 \left( {\begin{array}{c}
   |A_{i}|  \\
   2 \\
\end{array}} \right)
}   \]
where $\sum\limits_{{v_i}\in A_i \land {v_j}\in A_i \land \overline {{v_i}{v_j}} \in E(AG)} {f_W(\overline {{v_i}{v_j}})}$ is the sum of weights of all edges in $A_i$ and $ \left( {\begin{array}{c}
   |A_{i}|  \\
   2 \\
\end{array}} \right)$ is the maximum possible number of edges.

The objective of our allocation algorithm is to search for $m$ subgraphs of $AG$ that have the highest densities. Unfortunately, this problem is NP-complete \cite{DBLP:GraphClusteringNPC}, so we propose a heuristic solution as Algorithm \ref{alg:AllocationAlgorithm}. Algorithm \ref{alg:AllocationAlgorithm} is a variant of PNN and picks the locally optimal choice of merging two vertices in $AG$ at each step. Because our objective function can guarantee the locally optimal choice is also the optimal choice for the overall solution, Algorithm \ref{alg:AllocationAlgorithm} can find out the optimal clustering result of $AG$.

Generally speaking, we initialize a cluster for each fragment. Then, we repeatedly picks the two clusters (singletons or larger) that have the highest weight value to be merged. The weight between two clusters are the density value of merging them. Such merging is iterated until the size of the allocation graph has been reduced to $m$.

\vspace{-0.1in}
\begin{algorithm}[h] \label{alg:AllocationAlgorithm}
\caption{Allocation Algorithm}
\small
\KwIn{The allocation graph $AG$ and the preset threshold $\theta$}
\KwOut{ An allocation $\mathcal{A}=\{A_1,A_2,...,A_m\}$}

\For{each vertex $v_i$ in $V(VG)$}
{
    $A_i\gets \{v_i\}$;\\
}
Find the edge $e_{max}$ with the highest weight in $E(AG)$;\\
Initialize $AG^\prime$ that is the same to $AG$;\\
\While{$|V(AG^\prime)|\ne m$}
{
    Generating $AG^\prime$ from $AG$ by merging $e_{max}=\overline {{A_i}{A_{j}}}$ to $A_{ij}$;\\
    \For{each $A_k$ adjacent to ${A_{ij}}$ in $E(AG^\prime)$}
    {
        $f_W(\overline {{A_k}{A_{ij}}})\gets \frac{\sum\limits_{{v_i}\in {A_k} \land ({v_j}\in {A_i}\lor {v_j}\in {A_j}) \land \overline {{v_i}{v_j}} \in E(AG)} {f_W(\overline {{v_i}{v_j}})}}{{\left( \begin{array}{l}
 |{A_k^\prime}| \\
 2 \\
 \end{array} \right)}}  $
    }
    Find the edge $e_{max}$ with the highest weight in $E(AG^\prime)$;\\
}

\end{algorithm}

\vspace{-0.25in}
\section{Distributed Query Processing}\label{sec:QueryProcessing}

\begin{figure*}
       \subfigure[][{A New Input Query $Q_4$}]{%
      \includegraphics[scale=0.35]{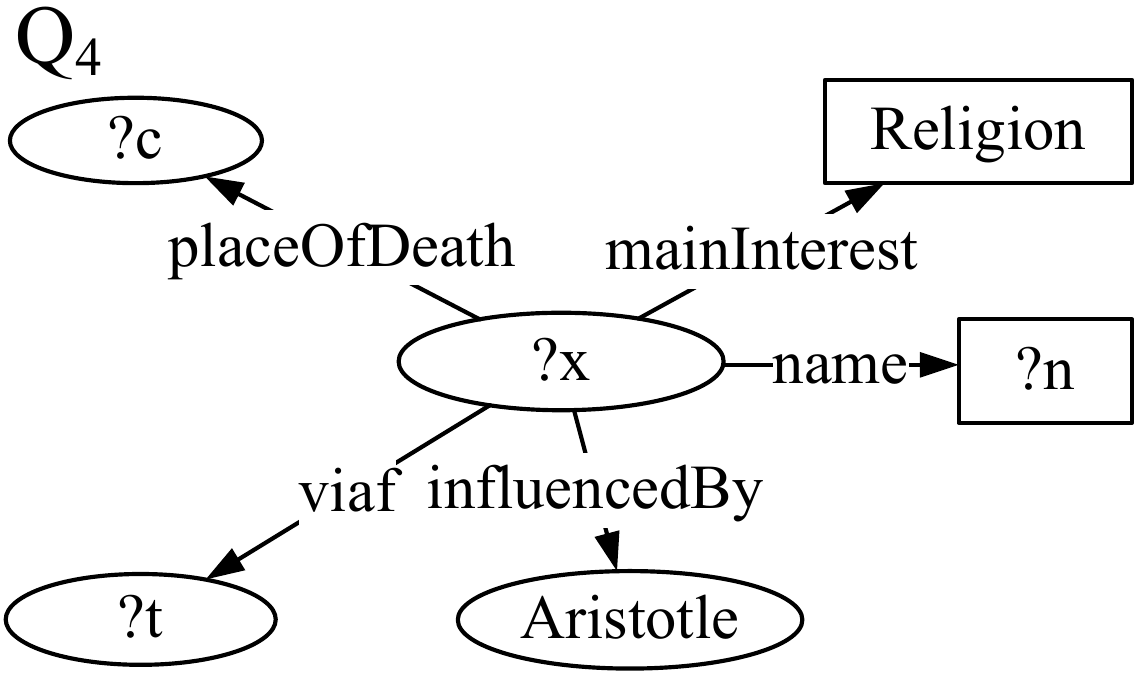}
       \label{fig:ExampleAdditionalQuery}%
       }
       \hspace{0.1in}
   \subfigure[][{Valid Decomposition $\mathcal{D}_1$}]{%
      \includegraphics[scale=0.35]{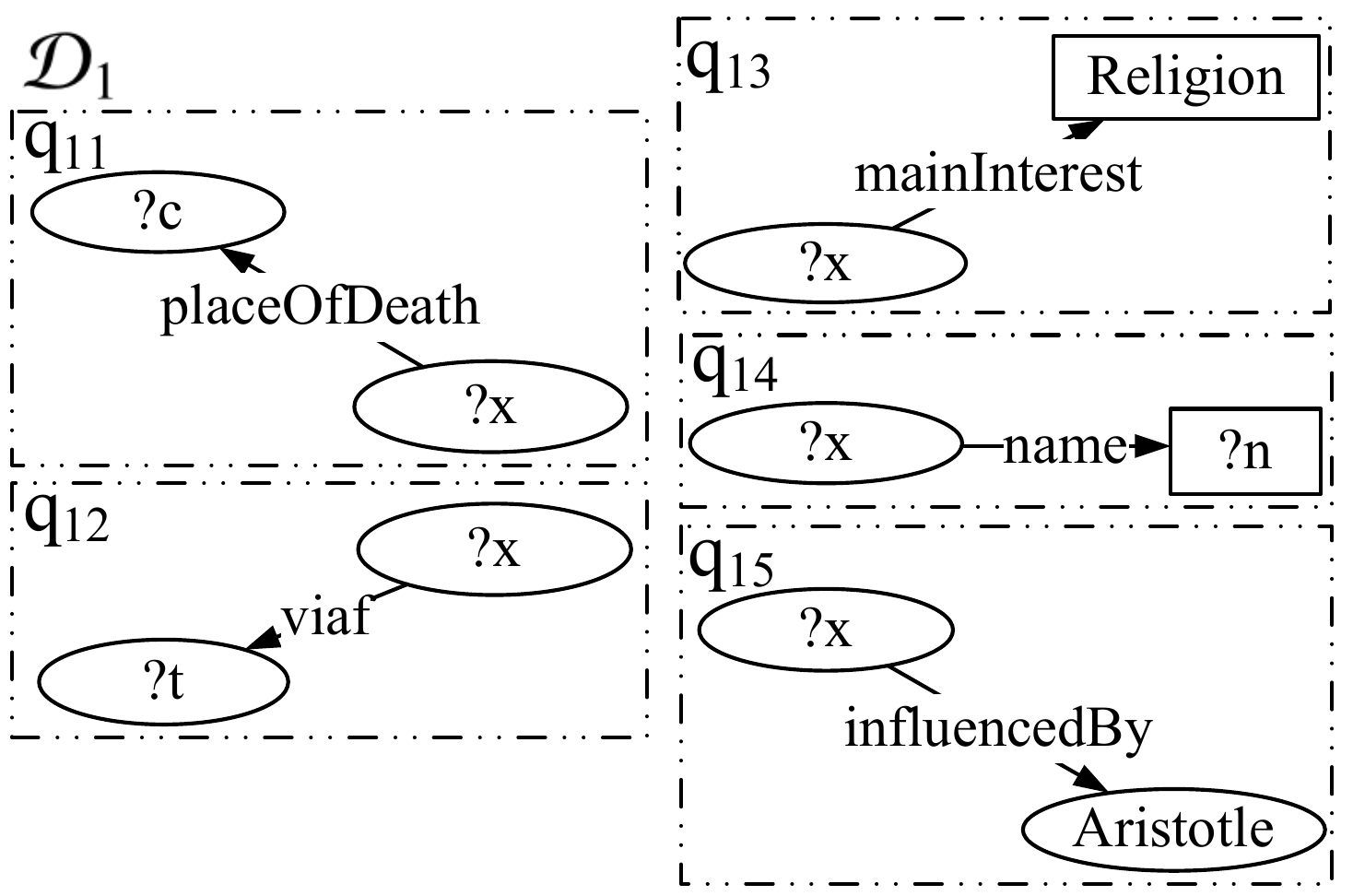}
       \label{fig:QueryDecomposition1}%
       }
       \hspace{0.1in}
   \subfigure[][{Valid Decomposition $\mathcal{D}_2$}]{%
      \includegraphics[scale=0.35]{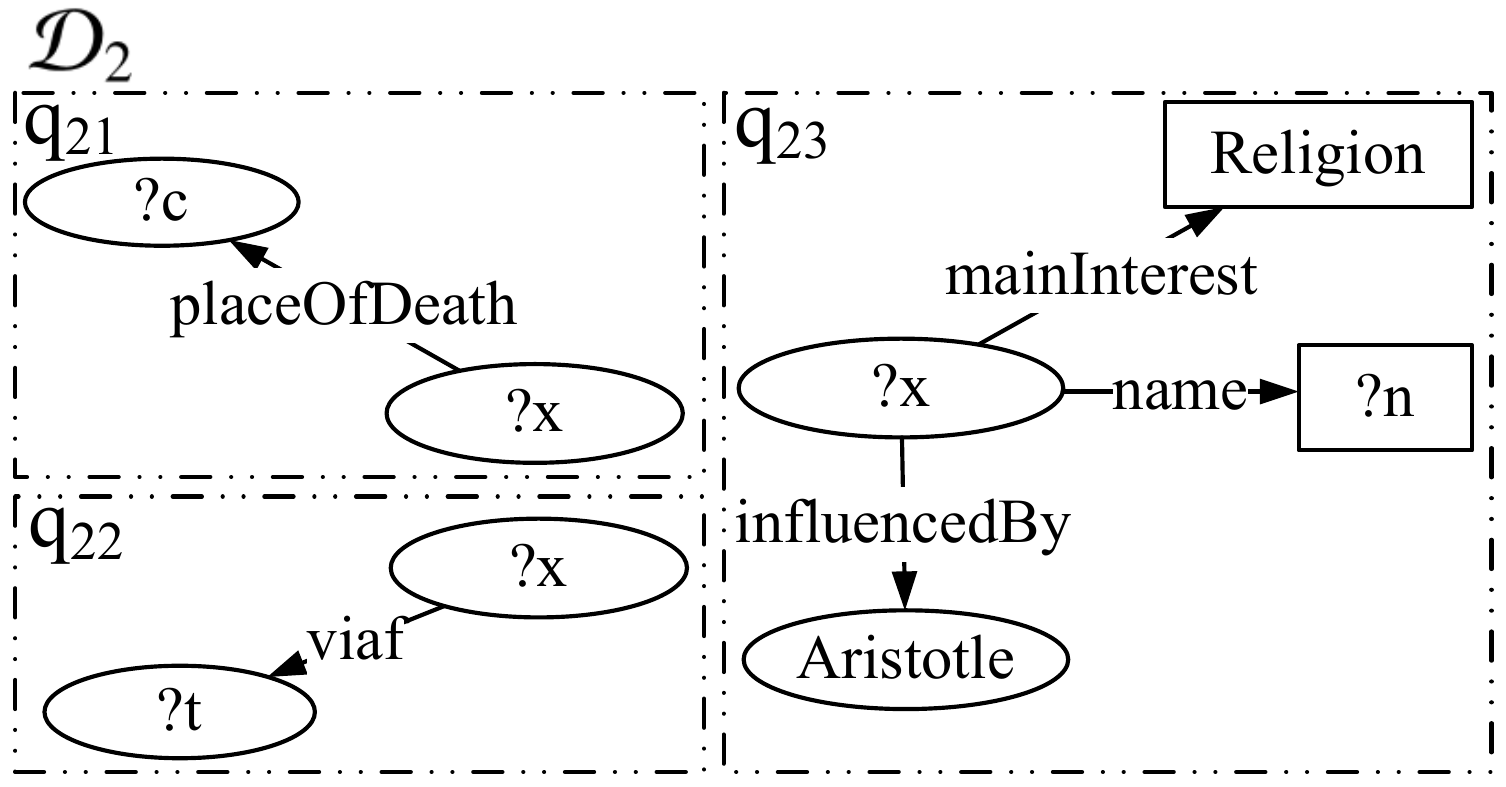}
       \label{fig:QueryDecomposition2}%
       }
       \vspace{-0.2in}
\caption{A New Input Query and Its Example Valid Decompositions}%
 \label{fig:ExampleDecomposition}
\end{figure*}

In this section, we discuss how to process a SPARQL query. For query processing, the metadata is necessary and we introduce how to maintain the metadata in a data dictionary in Section \ref{sec:DataDictionary}. Then, we discuss how to decompose a query into some subqueries in Section \ref{sec:QueryDecomposition}. Last, we discuss how to produce a distributed execution plan and execute all subqueries based on the plan in Section \ref{sec:QueryOptimization}.

\subsection{Data Dictionary}\label{sec:DataDictionary}
After fragmentation and allocation, the results of fragmentation and allocation need to be stored and maintained by the system. This information is necessary during distributed query processing. This information is stored in a data dictionary. The data dictionary stores a global statistics file generated at fragmentation and allocation time. It contains the following information: fragment definitions, their sizes, site mappings, access frequencies and so on.

Since each fragment corresponds to a frequent access pattern or a structural minterm predicate, the data dictionary uses the frequent access pattern with/without constraints as the representative of a fragment. Each frequent access pattern with/without constraints corresponds to a fragment and is associated with all statistics of the fragment. The data dictionary need to fast retrieve all frequent access patterns with/without constraints to determine the relevant frequent access pattern for a query.

We build a hash table to achieve the above objective. We first use the DFS coding \cite{DBLP:gIndex} to translates frequent access patterns into sequences. With the DFS code of a frequent access pattern, we can map any frequent access pattern to an integer by hashing its canonical label. Then, we use the hash table to locate frequent access patterns and retrieve the statistics of their corresponding fragments.

\subsection{Query Decomposition}\label{sec:QueryDecomposition}
When users input a query $Q$, the system first uses the data dictionary to determine which fragments are involved in the query and decomposes the query into some subqueries on fragments.

Given a query $Q$, a decomposition of $Q$ is a set of subqueries $\mathcal{D}=\{q_1,q_2,...,q_t\}$ such that (1) each $q_i$ is a subgraph of $Q$ and $q_i$ maps to a frequent access pattern or structural minterm predicate; (2) $V(q_1)  \cup ... \cup V(q_t)  = V(Q)$; and (3) $E(q_1)  \cup ... \cup E(q_t)  = E(Q)\land \forall i\ne j, E(q_i)\cap E(q_j)=\emptyset$.

Since we partition the RDF graph based on the frequent access patterns, we also decompose the query based on the frequent access patterns. In other words, we decompose the query into subqueries that are homomorphic to frequent access patterns. If a query involves infrequent properties that cannot be decomposed into subqueries homomorphic to any frequent access patterns, then each connected subgraph of the query that only contains infrequent properties corresponds to a subquery. We define the \emph{valid} decomposition as follows.

\vspace{-0.0in}
\begin{definition}(\textbf{Valid Decomposition})
Given a SPARQL query $Q$, a \emph{valid} decomposition $\mathcal{D}=\{q_1,q_2,...,q_t\}$ of $Q$ should meet the following constraint: if $q_i$ ($1\le i \le t$) is not homomorphic to any frequent access patterns, all edges in $q_i$ should be cold edges.
\end{definition}
\vspace{-0.0in}

There exist at least one valid decompositions. A possible decomposition is the decomposition of all subqueries of a single edge. Because we select all frequent access patterns of one edge, the decomposition of all subqueries of a single edge is valid.
Besides the valid decomposition, there may also exist some other valid decompositions. Hence, we propose a cost-model driven selection and the best valid decomposition is the valid decomposition with the smallest cost.

Here, we assume that the cost of a decomposition is the cost of joining all matches of the subqueries in $\mathcal{D}$ and each pair of subqueries' matches can join together. The assumption is the worst case, so that we can quantify the worst-case performance. Then, we define the cost of a decomposition as follows.
\[cost(\mathcal{D}) = \prod\limits_{{q_i} \in \mathcal{D}} {card({q_i})} \]
where $card(q_i)$ is the number of matches for $q_i$, which can be estimated by looking up the data dictionary.

\begin{example}
Assume that an user inputs a new query $Q_4$ as shown in Figure \ref{fig:ExampleAdditionalQuery}. Given frequent access patterns in Figure \ref{fig:generalizedSPARQL}, there can be two valid decompositions $\mathcal{D}_1$ and $\mathcal{D}_2$ as shown in Figures \ref{fig:QueryDecomposition1} and \ref{fig:QueryDecomposition2}. For vertical fragmentation, $q_{23}$ in $\mathcal{D}_2$ is evaluated on the vertical fragment of $p_3$ (Figure \ref{fig:ExampleVerticalFragment}); for horizontal fragmentation, $q_{23}$ is evaluated on the horizontal fragment of $mp_2$ (Figure \ref{fig:m2}).

Whether in vertical or in horizontal fragmentation, it is obvious that $\mathcal{D}_2$ has fewer subqueries than $\mathcal{D}_1$ and $card(q_{23})< card(q_{13})\times card(q_{14}) \times card(q_{15})$. Hence, $cost(\mathcal{D}_2)$ is smaller than $cost(\mathcal{D}_1)$, and $\mathcal{D}_2$ is more of a priority as the final decomposition.
\end{example}

Based on the above definitions, we propose the query decomposition algorithm as Algorithm \ref{alg:QueryDecomposition}. Because the SPARQL query graphs in real applications usually contain $10$ or fewer edges, we can use a brute-force implementation to enumerate all possible decompositions and find the decomposition with the smallest cost.

\vspace{-0.05in}
\begin{algorithm}[h] \label{alg:QueryDecomposition}
\caption{Query Decomposition Algorithm}
\small
\KwIn{A query $Q$}
\KwOut{ A valid decomposition $\mathcal{D}=\{q_1,q_2,...,q_t\}$ of query $Q$}

$MinCost\gets +\infty$;\\
Initialize $\mathcal{D}$ as the decomposition of all subqueries of a single edge;\\
\For{each possible valid decomposition $\mathcal{D}^\prime=\{q_1,...,q_t\}$}
{
    $CurrentCost\gets 1$;\\
    \For{each query $q_i$ in $\mathcal{D}^\prime$}
    {
        Estimate the number of results for $q_i$ as $card(q_i)$ based on the data dictionary;\\
        $CurrentCost\gets CurrentCost \times card(q_i)$\\
    }
    \If{$MinCost>CurrentCost$}
    {
        $\mathcal{D} \gets\mathcal{D}^\prime$;\\
        $MinCost\gets CurrentCost$;\\
    }
}
Return $\mathcal{D}$;
\end{algorithm}
\vspace{-0.05in}

\subsection{Query Optimization and Execution}\label{sec:QueryOptimization}
After decomposing the query, the next step is to find an execution plan for the query which is close to optimal. In this section, we discuss the major optimization issue of finding execution plan, which deals with the join ordering of subqueries. We extend the algorithm of System-R \cite{CiteSeerX:SystemR} to find the optimal execution plan for distributed SPARQL queries. The algorithm is described in Algorithm \ref{alg:QueryOptimization}.

Generally speaking, Algorithm \ref{alg:QueryOptimization} is a variant of System-R style dynamic programming algorithm. It firstly generates the best execution plan of $n-1$ subqueries, and then join the matches of $n-1$ subqueries with the matches of $n$-th subquery. The cost of an execution plan can also be estimated based on the number of subqueries' results, which is stored in the data dictionary.

Finally, each subquery is executed in the corresponding sites in parallel. The optimization of each subquery uses the existing methods in centralized RDF database systems. After the matches of all subqueries are generated, we join them together according to the optimal execution plan.

\begin{algorithm}[h] \label{alg:QueryOptimization}
\caption{Query Optimization Algorithm}
\small
\KwIn{A decomposition $\mathcal{D}=\{q_1,q_2,...,q_t\}$ of query $Q$}
\KwOut{ An execution plan $(...((q_{i1}\Join q_{i2})\Join q_{i3})\Join ... \Join q_{it})$}

\For{each two subqueries $(q_{i})$ and $(q_{j})$ where $1 \le i \ne j \le t$}
{
    Initialize an execution plan $q_{i}\Join q_{j}$ and estimate its cost;\\
    Store all execution plans and their costs in a table $T_2$;\\
}
\For{$i=3$ to $t$}
{
    \For{each execution plan $pl_j$ in $T_{i-1}$}
    {
        \For{each subquery $q_k$ that is not contained by $pl_j$}
        {
            Build execution plan $pl_j \Join q_k$ and estimate its cost;\\
            Store this execution plan and its costs in a table $T_{i}$;\\
        }
        \For{each two plans $pl_j$ and $pl_k$ in $T_{i}$}
        {
            \If{$pl_j$ and $pl_k$ map to the same set of subqueries}
            {
                Eliminate one of $pl_j$ and $pl_k$ that has the larger cost;\\
            }
        }
    }
}
Return the execution plan with the minimum cost;
\end{algorithm}

\vspace{-0.2in}
\section{Experimental Evaluation}\label{sec:Experiment}
We conducted extensive experiments to test the effectiveness of our proposed techniques on a real dataset, DBPedia, and a synthetic dataset, WatDiv. In this section, we report the setting of test data and various performance results.

\subsection{Setting}
\textbf{DBPedia}. DBPedia\footnote{http://km.aifb.kit.edu/projects/btc-2012/dbpedia/} is an RDF dataset extracted from Wikipedia. The DBPedia contains $163,977,110$ triples. We use the DBpedia SPARQL query-log as the workload. This workload contains queries posed to the official DBpedia SPARQL endpoint in 14 days of 2012. After removing some queries that cannot be handled, there are $8,151,238$ queries in the workload.

\textbf{WatDiv}. WatDiv \cite{DBLP:WatDiv} is a benchmark that enable diversified stress testing of RDF data management systems. In WatDiv, instances of the same type can have the different sets of attributes. For testing our methods, we generate five datasets varying sizes from 50 million to 250 million triples. By default, we use the RDF dataset with 100 million triples. In addition, WatDiv can generate a workload by instantiating some templates with actual RDF terms from the dataset. WatDiv provides 20 templates to generate test queries. We use these benchmark templates to generate a workload with 2000 test queries.

We conduct all experiments on a cluster of 10 machines running Linux, each of which has one CPU with four cores of 3.06GHz. Each site has 16GB memory and 150GB disk storage. We select one of these sites as a control site. At each site, we install gStore \cite{DBLP:gStore} to find matches. We use MPICH-3.0.4 running on C++ to join the results generated by subqueries.

For fair performance comparison, we use gStore and MPICH-3.0.4 to re-implement two recent distributed RDF fragmentation strategies. The first one is SHAPE \cite{DBLP:TripleGroup}, which defines a vertex and its neighbors as a triple group and assigns the triple groups according to the value of its center vertices. There are many different kinds of triple groups in \cite{DBLP:TripleGroup} and we use the subject-object-based triple groups in this paper. The second one is WARP \cite{DBLP:WARP}. WARP first uses METIS \cite{DBLP:metis} to divide the RDF graph into fragments. Then, it replicates all matches of a query pattern that cross two fragments in one fragment. We use all frequent access patterns to extend the fragments in WARP.

\subsection{Parameter Setting}\label{sec:minSupImpact}
Our frequent access patterns selection method uses a parameter: $minSup$. In this subsection, we discuss how to set up $minSup$ to optimize query processing. Note that, since the numbers of query templates and queries per query template in WatDiv are specified by users, the parameters can also be determined beforehand. Thus, we only discuss how to set the parameters for DBPedia.

\nop{
\subsubsection{$minSup$}\label{sec:minSupImpact}
}
Given a workload $\mathcal{Q}$, we set the support threshold, $minSup$, to find patterns whose access frequencies are larger than $minSup$. It is clear that the smaller $minSup$ is, the larger number of frequent access patterns there are. More frequent access patterns mean that a query in the workload may have a higher possibility to contain some frequent access patterns.

\vspace{-0.1in}
\begin{figure}[h]%
   \subfigure[$minSup$]{%
		\resizebox{0.5\columnwidth}{!}{
			\begin{tikzpicture}[font=\Large]

        \begin{axis}[
        xlabel=$minSup$,
        ylabel=Number of FAPs,
        ymax=200,
        xmax=10,
        xmin=1,
        xtick = {1,5,10},
        xticklabels={$0.1\%$,$0.5\%$,$1\%$},
    ]
      \addplot[ black] plot coordinates {
        (1,     181)
        (2,     109)
        (3,     106)
        (4,     67)
        (5,    59)
        (6,    57)
        (7,    49)
        (8,    48)
        (9,    47)
        (10,    44)
    };

    \end{axis}

\end{tikzpicture}
		}
       \label{fig:FAPNumberDBPedia}%
       }%
\subfigure[Workload Hitting Ratio]{%
		\resizebox{0.5\columnwidth}{!}{
			\begin{tikzpicture}[font=\Large]

        \begin{axis}[
        xlabel=Number of FAPs,
        ylabel=Coverage,
        ymax=110,
        ymin=0,
        xmin=1,
        xtick = {22,44,61,82},
        xticklabels={$50$,$100$,$150$,$200$},
        ytick = {0,20,40,60,80,100},
        yticklabels={$0\%$,$20\%$,$40\%$,$60\%$,$80\%$,$100\%$},
    ]
      \addplot[smooth, black] plot coordinates {
     (1, 16.72)
(2, 23.54)
(3, 28.90)
(4, 32.62)
(5, 35.95)
(6, 39.24)
(7, 42.48)
(8, 45.35)
(9, 48.08)
(10, 50.60)
(11, 53.09)
(12, 55.49)
(13, 57.75)
(14, 59.98)
(15, 62.19)
(16, 64.38)
(17, 66.56)
(18, 68.55)
(19, 70.50)
(20, 72.45)
(21, 74.40)
(22, 76.24)
(23, 78.05)
(24, 79.28)
(25, 80.49)
(26, 81.71)
(27, 82.82)
(28, 83.92)
(29, 85.02)
(30, 85.85)
(31, 86.67)
(32, 87.48)
(33, 88.29)
(34, 88.99)
(35, 89.64)
(36, 90.28)
(37, 90.82)
(38, 91.34)
(39, 91.73)
(40, 92.13)
(41, 92.53)
(42, 92.89)
(43, 93.26)
(44, 93.58)
(45, 93.91)
(46, 94.21)
(47, 94.43)
(48, 94.64)
(49, 94.83)
(50, 95.02)
(51, 95.20)
(52, 95.35)
(53, 95.51)
(54, 95.65)
(55, 95.77)
(56, 95.87)
(57, 95.98)
(58, 96.08)
(59, 96.19)
(60, 96.29)
(61, 96.40)
(62, 96.50)
(63, 96.60)
(64, 96.71)
(65, 96.81)
(66, 96.92)
(67, 97.01)
(68, 97.10)
(69, 97.20)
(70, 97.27)
(71, 97.35)
(72, 97.42)
(73, 97.50)
(74, 97.57)
(75, 97.64)
(76, 97.70)
(77, 97.77)
(78, 97.83)
(79, 97.88)
(80, 97.93)
(81, 97.98)
(82, 98.03)
    };

    \end{axis}

\end{tikzpicture}
		}
       \label{fig:workloadcoverageDBPedia}%
       }%
\vspace{-0.2in}
 \caption{Effect of Frequent Access Patterns}%
 \label{fig:FAPEffect}
\end{figure}
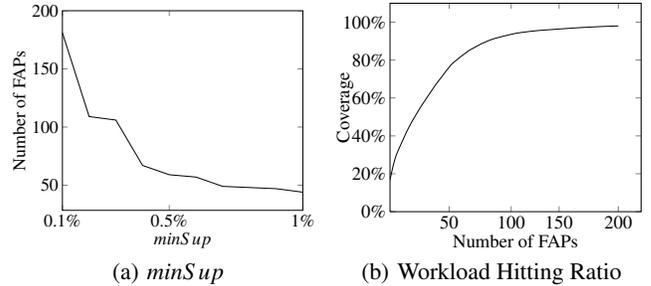

Figure \ref{fig:FAPNumberDBPedia} shows the impact of $minSup$. As $minSup$ increases, the number of frequent access patterns (FAPs) decreases. Hence, when we set $minSup$ as $0.1\%$ of the total number of queries in the workload, there are $163$ frequent access patterns for DBPedia. When $minSup$ is $1\%$ of the total number of queries, the number of frequent access patterns is reduced to $44$ for DBPedia. Furthermore, fewer frequent access patterns means that fewer queries in the workload are hit, as shown in Figure \ref{fig:workloadcoverageDBPedia}.

Even if we set $minSup$ as $0.1\%$ of the total number of queries, the number of frequent access patterns is not large. Hence, in the following, we set $minSup$ as $0.1\%$ of the total number of queries for DBPedia by default.

\subsection{Throughput}
In this experiment, we test the throughput of different fragmentation strategies. We sample $1\%$ of all queries in the workload and measure the throughput in queries per minute. Figure \ref{fig:ThroughputTest} shows the number of queries answered in one minute of different fragmentation strategies.

\begin{figure}[h]
   \subfigure[DBPedia]{%
		\resizebox{0.45\columnwidth}{!}{
			\begin{tikzpicture}[font=\LARGE]
\begin{axis}[
ybar,
width = 10cm,
height = 8.5cm,
  xtick = {4,5,6,7},
   xticklabels={SHAPE, WARP, VF, HF},
   ylabel={Queries Per Minute},
   bar width=20pt,
  ]
   \addplot plot coordinates {
        (4,    23.85152426)

        (5,    32.40359323)
        (6,    46.06632899)
        (7,     38.90192824)
    };
\end{axis}

\end{tikzpicture}
		}
       \label{fig:ThroughputTestDBPedia}%
       }
        \hspace{0.1in}
\subfigure[WatDiv]{%
		\resizebox{0.45\columnwidth}{!}{
			\begin{tikzpicture}[font=\LARGE]
\begin{axis}[
ybar,
width = 10cm,
height = 8.5cm,
ymin =0,
  xtick = {4,5,6,7},
   xticklabels={SHAPE, WARP, VF, HF},
   ylabel={Queries Per Minute},
   bar width=20pt,
  ]
   \addplot plot coordinates {
        (4,    75.87026354)

        (5,    82.59229689)
        (6,    533.6685562)
        (7,    385.18270507)
    };
\end{axis}

\end{tikzpicture}
		}
       \label{fig:ThroughputTestWatDiv}%
       }%
       \vspace{-0.2in}
 \caption{Throughput Comparison}%
 \label{fig:ThroughputTest}
\end{figure}
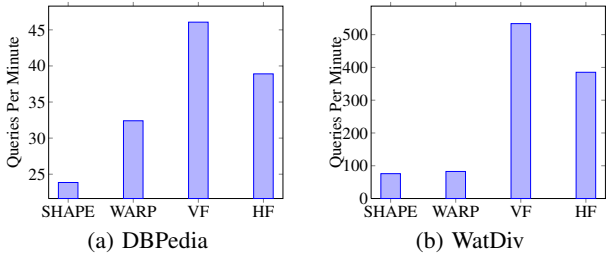

For SHAPE and WARP, each query concerns all fragments, so queries are still processed sequentially. Since WARP is more balanced than SHAPE, the throughput of WARP is a little better than SHAPE. WARP can handle about 32 and 82 queries in one minute for DBPedia and WatDiv, while SHAPE can handle 24 and 75 queries.

For the vertical fragmentation strategy (VF), since a query often only contains a few frequent access patterns, it only involves a few fragments. Two queries involving different fragments can be evaluated in parallel. Hence, about 46 queries and 533 queries can be answered in one minute for DBPedia and WatDiv, respectively. For the horizontal fragmentation strategy (HF), each frequent access pattern specified by the query may map to many structural minterm predicates and the corresponding fragments of these structural minterm predicates may be allocated to different sites. Hence, the throughput of the horizontal fragmentation strategy is a little worse than the vertical fragmentation strategy, and 38 and 385 queries can be answered in one minute for DBPedia and WatDiv.

\subsection{Response Time}
In this experiment, we test the query performance of different fragmentation strategies. We also sample $1\%$ of all queries in the workload and compute the average query response time of a query. Figure \ref{fig:PerformanceTest} shows the performance results.

SHAPE and WARP partition the RDF graph into some subgraphs, and distributes these subgraphs among different sites. The query should be processed in many sites in parallel. Hence, SHAPE is less balanced and sometime need cross-fragment joins, so SHAPE needs about 2.5 and 0.79 seconds to answer a query for DBPedia and WatDiv, while WARP takes 1.8 and 0.72 seconds.

For the vertical fragmentation strategy, only relevant fragments are searched for matches and the search space is reduced. Therefore, a query can be answered in about 0.8 seconds for DBPedia and 0.3 seconds for WatDiv. For the horizontal fragmentation strategy, we can filter out all irrelevant fragments mapping to the structural minterm predicates not specified by the query, which can further reduce the search space. Hence, a query can be answered with about 0.6 seconds for DBPedia and 0.15 seconds for WatDiv.

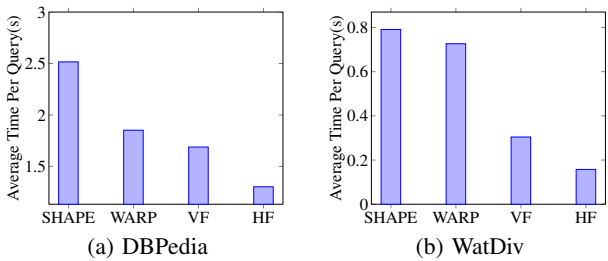
\begin{figure}[h]%
   \subfigure[DBPedia]{%
		\resizebox{0.45\columnwidth}{!}{
			\begin{tikzpicture}[font=\LARGE]
\begin{axis}[
ybar,
width = 10cm,
height = 8.5cm,
ymax =3,
  xtick = {2,3,4,5,6,7},
   xticklabels={Random, PF, SHAPE, WARP, VF, HF},
   ylabel={Average Time Per Query(s)},
   bar width=20pt,
  ]
   \addplot plot coordinates {

        (2,     19.61643)
        (3,     15.065685)
        (4,     2.5155625)

        (5,    1.8516465)
        (6,    1.688238)
        (7,    1.302155)
    };
\end{axis}

\end{tikzpicture}
		}
       \label{fig:PerformanceTestDBPedia}%
       }
       \hspace{0.1in}
\subfigure[WatDiv]{%
		\resizebox{0.45\columnwidth}{!}{
			\begin{tikzpicture}[font=\LARGE]
\begin{axis}[
ybar,
width = 10cm,
height = 8.5cm,
ymin =0,
  xtick = {4,5,6,7},
   xticklabels={SHAPE, WARP, VF, HF},
   ylabel={Average Time Per Query(s)},
   bar width=20pt,
  ]
   \addplot plot coordinates {

        (4,     0.790823667 )

        (5,    0.72646)
        (6,    0.304368333)
        (7,    0.1575463)
    };
\end{axis}

\end{tikzpicture}
		}
       \label{fig:PerformanceTestWatDiv}%
       }%
       \vspace{-0.2in}
 \caption{Performance Comparison}%
 \label{fig:PerformanceTest}
\end{figure}

\vspace{-0.15in}
\subsection{Scalability Test}
In this experiment, we investigate the impact of dataset size on our fragmentation strategies. We generate five WatDiv datasets varying the from 50 million to 250 million triples to test our strategies. Figure \ref{fig:ScalabilityTest} shows the results. Generally speaking, as the size of RDF datasets gets larger, the average response times of one query increase and the numbers of queries answered in one minute decrease accordingly. However, the rates of increase and decrease are slow, and we can say that the query performance and throughput are scalable with RDF graph size on the datasets.

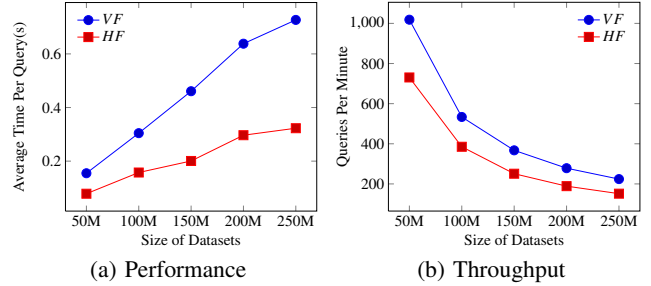
\begin{figure}[h]%
   \subfigure[Performance]{%
		\resizebox{0.5\columnwidth}{!}{
			\begin{tikzpicture}[font=\large]
    \begin{axis}[
        xlabel=Size of Datasets,
        ylabel=Average Time Per Query(s),
        xtick = {10,30,50,70,90},
        xticklabels={50M,100M,150M,200M,250M},
        legend cell align=left,
        legend style={draw=none},
        legend pos= north west
    ]
    \addplot plot[mark size=3.5pt]  coordinates {
        (10,    0.154831019)
        (30,    0.304368333)
        (50,    0.460815078)
        (70,    0.63807107)
        (90,   0.726799136)
    };
    \addplot plot[mark size=3.5pt]  coordinates {
        (10,    0.07847678)
        (30,    0.1575463)
        (50,    0.200629654)
        (70,    0.296674878)
        (90,   0.322782528)
    };
    \legend{$VF$\\$HF$\\}

    \end{axis}
\end{tikzpicture}
		}
       \label{fig:PerformanceScalability}%
       }%
\subfigure[Throughput]{%
		\resizebox{0.5\columnwidth}{!}{
			\begin{tikzpicture}[font=\large]
    \begin{axis}[
        xlabel=Size of Datasets,
        ylabel=Queries Per Minute,
        xtick = {10,30,50,70,90},
        xticklabels={50M,100M,150M,200M,250M},
        legend cell align=left,
        legend style={draw=none},
        legend pos= north east
    ]
    \addplot plot[mark size=3.5pt]  coordinates {
        (10,    1018.380415)
        (30,    533.6685562)
        (50,    367.368919)
        (70,    278.387651)
        (90,   224.1063406)
    };
    \addplot plot[mark size=3.5pt]  coordinates {
        (10,    730.5117913)
        (30,    385.18270507)
        (50,    250.8636195)
        (70,    188.8612349)
        (90,   151.4335606)
    };
    \legend{$VF$\\$HF$\\}

    \end{axis}
\end{tikzpicture}
		}
       \label{fig:ThroughputScalability}%
       }%
       \vspace{-0.15in}
 \caption{Varying Size of Datasets}%
 \label{fig:ScalabilityTest}
\end{figure}

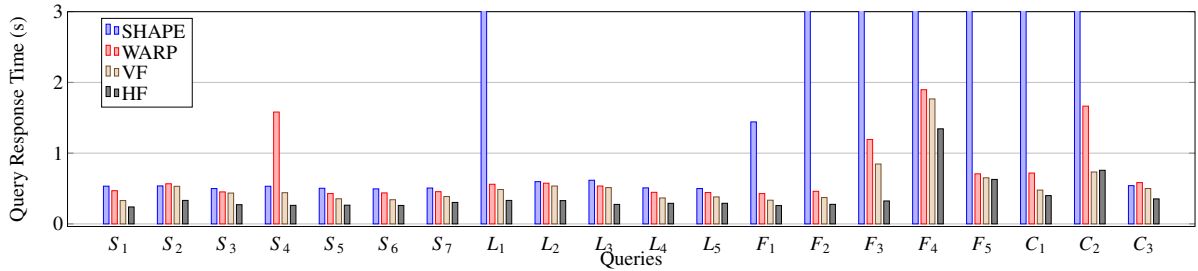
\begin{figure*}%
		\resizebox{0.9\columnwidth}{!}{
				\begin{tikzpicture}[font=\Large]
 		 \begin{axis}[
                anchor={(0,100)},
               width = 30cm,
               height = 7cm,
    			major x tick style = transparent,
    			ybar,
   	ymajorgrids = true,
   ymax=3,
   			ylabel = {Query Response Time (s)},
    			xlabel = {Queries},
    			symbolic x coords = {$S_1$,$S_2$,$S_3$,$S_4$,$S_5$,$S_6$,$S_7$,$L_1$,$L_2$,$L_3$,$L_4$,$L_5$,$F_1$,$F_2$,$F_3$,$F_4$,$F_5$,$C_1$,$C_2$,$C_3$},
    			scaled y ticks = true,
			bar width=4pt,
             enlarge x limits=0.05,
			legend pos= north west,
 legend cell align=left
   		]

    		\addplot coordinates {($S_1$, 0.533234) ($S_2$, 0.53676) ($S_3$, 0.499481) ($S_4$, 0.53192) ($S_5$, 0.503804) ($S_6$, 0.494941) ($S_7$, 0.507011
) ($L_1$, 8.16512) ($L_2$, 0.596883) ($L_3$, 0.617002) ($L_4$, 0.50908) ($L_5$, 0.499741) ($F_1$, 1.44148) ($F_2$, 5.45851) ($F_3$, 8.50395) ($F_4$, 4.84184) ($F_5$, 15.4971) ($C_1$, 15.0813) ($C_2$, 10) ($C_3$, 0.540517)};

    \addplot coordinates {($S_1$, 0.469474) ($S_2$, 0.567474) ($S_3$, 0.45226) ($S_4$, 1.58119) ($S_5$, 0.428684) ($S_6$, 0.438109) ($S_7$, 0.45545) ($L_1$, 0.560161) ($L_2$, 0.575275) ($L_3$, 0.535403) ($L_4$, 0.446221) ($L_5$, 0.443223) ($F_1$, 0.428258) ($F_2$, 0.460722) ($F_3$, 1.19331) ($F_4$, 1.8969) ($F_5$, 0.70799) ($C_1$, 0.718769) ($C_2$, 1.66483) ($C_3$, 0.583648)};

    \addplot coordinates {($S_1$, 0.33) ($S_2$, 0.531087) ($S_3$, 0.436131) ($S_4$, 0.440956) ($S_5$, 0.355201) ($S_6$, 0.341694) ($S_7$, 0.386994) ($L_1$, 0.485984) ($L_2$, 0.535543) ($L_3$, 0.513117) ($L_4$, 0.366483) ($L_5$, 0.3813) ($F_1$, 0.334951) ($F_2$, 0.372581) ($F_3$, 0.84638) ($F_4$, 1.76639) ($F_5$, 0.6513025) ($C_1$, 0.477901) ($C_2$, 0.733719) ($C_3$, 0.500482)};

    \addplot coordinates {($S_1$, 0.2395) ($S_2$, 0.332062) ($S_3$, 0.271309) ($S_4$, 0.261914) ($S_5$, 0.264352) ($S_6$, 0.260861) ($S_7$, 0.303536) ($L_1$, 0.331835) ($L_2$, 0.328958) ($L_3$, 0.275502) ($L_4$, 0.290804) ($L_5$, 0.291355) ($F_1$, 0.258922) ($F_2$, 0.276608) ($F_3$, 0.324125) ($F_4$, 1.34389) ($F_5$, 0.628596) ($C_1$, 0.399875) ($C_2$, 0.757802) ($C_3$, 0.35326)};

   		 \legend{SHAPE,WARP,VF,HF}
  		\end{axis}
\end{tikzpicture}
		}
\vspace{-0.2in}
 \caption{Query Performance of Benchmark Queries}%
 \label{fig:BenchmarkPerformance}
\end{figure*}

\vspace{-0.15in}
\subsection{Redundancy}
Table \ref{table:RedundancyRatio} shows the redundancy ratio of the number of edges in all generated fragments to the total number of edges in the original RDF graph for each fragmentation strategy. For SHAPE, if a fragment contains a vertex with high degree, all adjacent edges of the high degree vertex are introduced. Most of these introduced edges are redundant, and cause the redundancy ratios of SHAPE nearly $3$ for DBPedia and 1.74 for WatDiv. WARP divides the RDF graph while minimizing the edge cut, so the number of edges crossing two fragments for WARP is smaller than the number for SHAPE. Therefore, the redundancy ratio of WARP is smaller. Note that, WatDiv is much denser than DBPedia, so the minimum cut-set for WatDiv contains a higher proportion of edges. Hence, the redundancy ratio of WatDiv is 1.54, but the ratio of DBPedia is only 1.01.

\begin{table}[h]
\caption{Redundancy (Ratio to original dataset)}
\centering
\scriptsize
\begin{tabular}{|c|c|c|}
\hline
 & {DBPedia}& {WatDiv}\\
\hline
  \hline
  SHAPE &2.99 & 1.74\\
  \hline
  WARP & 1.01&1.54\\
  \hline
  VF &1.38 &1.04	\\
  \hline
  HF &1.42 &1.06\\
  \hline
  \end{tabular}
  \vspace{-0.1in}
\label{table:RedundancyRatio}
\end{table}
\vspace{-0.1in}

Our fragmentation strategies find and materialize some frequent access patterns (or structural minterm predicates). As discussed in Section \ref{sec:minSupImpact}, the number of frequent access patterns is limited. Hence, the redundancy ratios of our fragmentation strategies are limited. Note that, the horizontal strategy has a little larger redundancy ratio than the vertical fragmentation strategy. This is because that different structural minterm predicates derived from the same frequent access patterns share some common triple patterns. These common triple patterns may cause more redundant edges.

\subsection{Offline Performance}
Table \ref{table:LoadingPerformance} shows the data partitioning and loading time of the datasets for different fragmentation strategies. Although SHAPE has an almost perfect uniform distribution, its redundancy ratio is too large and each fragment contains too many redundant edges. Hence, loading fragments in SHAPE also takes much time. WARP uses METIS \cite{DBLP:metis}. Since DBPedia is sparse (i.e. $|(E(G)|/|V(G)|\approx 1$), METIS can guarantee that there are a few redundant edges and all fragments have a nearly uniform distribution. Then, WARP has less loading time than SHAPE. However, for WatDiv, the data graph is dense (i.e. $|(E(G)|/|V(G)|\gg 1$), so the fragmentation result of METIS is unbalanced. Then, WARP takes more loading time than SHAPE to load the largest fragments.

Since nearly half of all edges for DBPedia are infrequent edges, loading the cold graph of DBPedia is the bottleneck in our fragmentation strategies. However, in WatDiv, there are not so many infrequent edges. Then, the loading time of our fragmentation strategies for WatDiv is more acceptable. Note that, because the structural minterm predicates are derived from the frequent access patterns, the cold graphs for the vertical and horizontal fragmentation strategies are the same. Thus, the loading times for the vertical and horizontal fragmentation strategies are the same.

\begin{table}[h]
\caption{Partitioning and Loading Time (in min)}
\centering
\scriptsize
\begin{tabular}{|c||c|c|c||c|c|c|}
\hline
 & \multicolumn{3}{c||}{DBPedia}& \multicolumn{3}{c|}{WatDiv}\\
\hline
\hline
 \tabincell{c}{Strategies} & \tabincell{p{1cm}}{Partitioning} & \tabincell{p{0.7cm}}{Loading}& \tabincell{c}{Total}& \tabincell{p{1cm}}{Partitioning} & \tabincell{p{0.7cm}}{Loading}& \tabincell{c}{Total}\\
  \hline
  \nop{
  Cent & -&178	&	178 &	- &19.10	&	19.10\\
  \hline
  Random & 40 &31	&	71 &	&	&	\\
  \hline
  PF & 40 &51	&	91 &	1.86&	3.24&	5.10\\
  \hline
  }
  SHAPE & 41&	30 &	71 & 20	& 19	& 49	\\
  \hline
  WARP &43 &	28&	 71&	33  & 46	&	79 \\
  \hline
 VF  &50 &97	&	147 &	31 &	28 &59	 \\
  \hline
  HF & 58&	97	 &	139 & 34	&  28  &	62 \\
  \hline
  \nop{
  MF & &	97	 &	 &	&    &	\\
  \hline
  }
  \end{tabular}
  \vspace{-0.15in}
\label{table:LoadingPerformance}
\end{table}
\vspace{-0.15in}

\subsection{Experiments for Benchmark Queries}
In this experiment, we compare our methods with other fragmentation strategies on benchmark queries provided by WatDiv. There are 20 benchmark queries in WatDiv, and these queries can be classified into 4 structural categories: linear (L), star (S), snowflake (F) and complex (C). Figure \ref{fig:BenchmarkPerformance} shows the performance of different approaches. Generally speaking, we find out that our methods outperforms other two methods in most cases. This is because that each benchmark query can be decomposed into some frequent access patterns or structural minterm predicates. Hence, our fragmentation strategies can filter out many irrelevant fragments. In contrast, SHAPE and WARP always concern all fragments, and SHAPE further needs some cross-fragment joins for complex queries.

Let us look deeper into Figure \ref{fig:BenchmarkPerformance} and analyze each individual fragmentation strategy. SHAPE has to involve all fragments for any queries, so its performance is always worse than our fragmentation strategies. In particular, for star queries ($S_1$ to $S_7$), the difference between the query response times of SHAPE and our fragmentation strategies is not very large, because the subject-object-based triple groups that we use can guarantee that there is no intermediate result and all star queries can be answered at each fragment locally. However, for other shapes of queries, SHAPE has to decompose the queries and do cross-fragment joins to merge the intermediate results. Then, the performance of SHAPE decreases greatly. Especially for the unselective queries ($L_1$, $F_1$, $F_2$, $F_3$, $F_4$, $F_5$, $C_1$ and $C_2$), the performance of SHAPE is an order of magnitude worse than our fragmentation strategies.

Since WARP also use patterns to replicate triples for avoiding cross-fragment joins in complex queries, WARP has better performance that SHAPE in most case. However, WARP still always concerns all fragments in all sites for any kind of queries. The search space of WARP for a query is higher than our fragmentation strategies. Thus, our fragmentation strategies always result in better performance. Especially for the query of very complex structure ($C_2$), our fragmentation strategies can filter out many irrelevant fragments, which can result in much smaller search space than WARP. Hence, for $C_2$, our strategies is twice as fast as WARP.

Since all benchmark queries are generated from instantiating benchmark templates with actual RDF terms, these benchmark queries always correspond to a limited number of minterm predicates. Hence, the horizontal fragmentation is always faster than the vertical fragmentation.

\vspace{-0.15in}
\section{Related Work}\label{sec:related}
For both the general graph and the RDF graph, as the graph size grows beyond the capability of a single machine, many works \cite{DBLP:Partout,DBLP:WARP,DBLP:GraphPartition,TKDE11:HadoopRDF,DBLP:conf/dasfaa/YangCWCD13,DBLP:TripleGroup,DBLP:conf/IEEEcloud/LeeLTZZ13,SIGMOD14:TriAD,DBLP:conf/icde/WangPS13,DBLP:metis,ICDE13:EAGRE,ICDE14:MLP,ICDE15:PathPartitioning} have been proposed about graph fragmentation and allocation. We can divide all these methods into two categories:
global goal-oriented graph fragmentation methods and local pattern-based graph fragmentation methods.

\textbf{Global Goal-Oriented Graph Fragmentation.}
For this kind of methods \cite{DBLP:metis,DBLP:GraphPartition,ICDE13:EAGRE,ICDE14:MLP,DBLP:journals/pvldb/MargoS15}, they divide $G$ into several fragments while maximizing some goal function. They first transform a large graph into a small graph; then, apply some graph partitioning algorithms on the small graph; finally, the partitions on the small graph are projected back to the original graph. These methods often apply some existing methods (such as KL \cite{BibSonomy:KL}) directly on the transformed graph in the second step. If we track the transforming step, the partitions on the small graph can be easily projected back to the original graphs in the third step. Hence, the largest difference among different graph coarsening-based methods is how to coarsen the original graph into a small graph.

In particular, METIS \cite{DBLP:metis} uses the maximal matching to coarsen the graph. A matching of a graph is a set of edges that no two edges share an endpoint. A maximal matching of a graph is a matching to which no more edges can be added and remain a matching. GraphPartition \cite{DBLP:GraphPartition} directly uses METIS in the RDF graph. WARP \cite{DBLP:WARP} uses some frequent structures in workload to further extend the results of GraphPartition. EAGRE \cite{ICDE13:EAGRE} coarsens the RDF graph by using the entity concept in RDF data. It considers an entity to be a subject and its complete description. By grouping the entities of the same class, an RDF graph can be compressed as a compressed RDF entity graph. MLP \cite{ICDE14:MLP} designs a method to coarsen the graph by label propagation. Vertices with the same label after the label propagation are coarsened to a vertex in the coarsened graph. Sheep \cite{DBLP:journals/pvldb/MargoS15} transform the graph into a elimination tree via a distributed map-reduce operation, and then partition this tree while reducing communication volume. Tomaszuk et. al. \cite{DBLP:conf/bdas/TomaszukSW15} briefly survey how to apply existing graph fragmentaion solutions from the theory of graphs to RDF graphs.

Global goal-oriented graph fragmentation methods assume that if there are few edges crossing different fragments, the communication cost is little. If an application involves nearly all vertices in the graph, few cross-fragments edges indeed result in little communication. A typical application suitable for graph coarsening-based methods is PageRank.

In some applications, one static fragmentation cannot fit all. Hence, Sedge \cite{DBLP:conf/sigmod/YangYZK12} maintains many fragmentations with different crossing edges, while Shang et. al. \cite{DBLP:conf/icde/ShangY13} move some vertices of one fragment to another fragment during graph computing according to the workload. Yan et. al. \cite{DBLP:conf/icde/YanWZQMP09} propose a indexing scheme based on fragmentation to help query engine fast locate the instances.

\textbf{Local Pattern-based Graph Fragmentation.}
For this kind of methods \cite{TKDE11:HadoopRDF,DBLP:conf/dasfaa/YangCWCD13,DBLP:TripleGroup,DBLP:conf/IEEEcloud/LeeLTZZ13,SIGMOD14:TriAD,DBLP:conf/icde/WangPS13,ICDE15:PathPartitioning} , they first find certain patterns as the fragmentation units to cover the whole graph; then, they distribute these patterns into sites. The local pattern-based methods mainly differ in their definitions of the fragmentation unit.

HadoopRDF \cite{TKDE11:HadoopRDF}\nop{ and P-Partition \cite{DBLP:PredicateJoin}} groups triples with the same property together and each group corresponds to a fragmentation unit. Then, they store all fragmentation units over HDFS. Yang et. al.\cite{DBLP:conf/dasfaa/YangCWCD13} define some special query patterns, and subgraphs of a pattern are considered as a fragmentation unit. Lee et. al. \cite{DBLP:TripleGroup,DBLP:conf/IEEEcloud/LeeLTZZ13} define the fragmentation unit as a vertex and its neighbors, which they call a triple group. The triple groups are distributed based on some heuristic rules.  For each vertex, SketchCluster \cite{DBLP:conf/icde/WangPS13} identifies the set of labeled vertices reachable within its one-hop neighborhood as its features and employs the KModes algorithm to group related vertices based on the features. Partout \cite{DBLP:Partout} extends the concepts of minterm predicates in relational database systems, and uses the results of minterm predicates as the fragmentation units. TriAD \cite{SIGMOD14:TriAD} uses METIS \cite{DBLP:metis} to divide the RDF graph into many partitions. Then, each result partition is considered as a unit and distributed among different sites based on a hash function. PathPartitioning \cite{ICDE15:PathPartitioning} uses paths in RDF graphs as fragmentation units.

Local pattern-based graph fragmentation methods assume that some real applications only concerns a part of the whole graph. If an application only concerns the vertices of some certain patterns, these methods only access the relevant fragments and reduce the communication cost across fragments. A typical example application is subgraph homomorphism checking.

\vspace{-0.15in}
\section{Conclusion}\label{sec:Conclusion}
In this paper, we discuss how to manage the large RDF graph in a distributed environment. First, we mine and select some frequent access patterns to partition the RDF graph into many smaller fragments. Then, we propose an allocation algorithm to distribute all fragments over different sites. Last, we discuss how process the query based on the results of fragmentation and allocation. Extensive experiments verify our approaches.

\small{
\textbf{Acknowledgement.} This was supported by 863 project under Grant No. 2015AA015402, NSFC under Grant No. 61532010, 61370055 and 61272344. Lei Chen's work is supported in part by the Hong Kong
RGC Project N HKUST637/13, National Grand Fundamental Research 973
Program of China under Grant 2014CB340303 , NSFC Grant No. 61328202,
NSFC Guang Dong Grant No. U1301253, Microsoft Research Asia Gift
Grant and Google Faculty Award 2013
}

\small
\bibliographystyle{abbrv}
\bibliography{IEEEexample}

\end{document}